\documentclass[journal]{IEEEtran}
\IEEEoverridecommandlockouts
\usepackage{makecell}
\usepackage{cite}
\usepackage{amsmath,amssymb,amsfonts}

\usepackage{algorithmic}
\usepackage{graphicx}
\usepackage{textcomp}
\newtheorem{theorem}{Theorem}

\newtheorem{remark}{Remark}
\usepackage{bbm}
\usepackage{bm}
\newtheorem{proof}{Proof}
\newtheorem{corollary}{Corollary}
\usepackage{bbm}
\usepackage{pifont}
\usepackage[table]{xcolor}
\usepackage{svg}

\begin{document}
\setlength{\textfloatsep}{0.11cm}
\setlength{\baselineskip}{0.42cm}
\setlength{\abovecaptionskip}{1mm}

\title{{Constrained Capacity Analysis for Faster-than-Nyquist Signaling}}
\author{Zichao~Zhang,~\IEEEmembership{Student Member,~IEEE,}
		Melda~Yuksel,~\IEEEmembership{Senior Member,~IEEE,}
  Gokhan M. Guvensen,~\IEEEmembership{Member,~IEEE,} 
Halim~Yanikomeroglu,~\IEEEmembership{Fellow,~IEEE}
\thanks{This work was funded in part by the Scientific and Technological Research Council of Turkey, TUBITAK, under grant 122E248, and in part by a Discovery Grant awarded by the Natural Sciences and Engineering Research Council of Canada (NSERC).}
\thanks{Z. Zhang and H. Yanikomeroglu are with the Department of Systems and Computer Engineering at Carleton University, Ottawa, ON, K1S 5B6, Canada e-mail:	zichaozhang@cmail.carleton.ca, halim@sce.carleton.ca.}
\thanks{M. Yuksel and G. Guvensen are with the Department of Electrical and Electronics Engineering, Middle East Technical University, Ankara, 06800, Turkey, e-mail: {ymelda, guvensen}@metu.edu.tr.}
}

\maketitle

\begin{abstract}
 This paper studies the constrained-capacity for precoded faster-than-Nyquist (FTN) signaling with finite-alphabet inputs. Despite the promise of accelerated transmission, the fundamental rate limit of precoded FTN signaling
under practical finite-alphabet constraints remains unclear. By introducing cyclic prefix (CP) and cyclic suffix (CS), the FTN channel is decomposed into a set of parallel eigenchannels by the discrete Fourier transform (DFT) matrix, based on which the constrained capacity is derived. The results demonstrate that time acceleration can improve spectral efficiency over Nyquist signaling even when a fixed modulation order is employed. Moreover, in the low and moderate signal-to-noise ratio (SNR) regimes, a smaller constellation combined with stronger time acceleration can outperform a larger constellation with weaker acceleration. Next, the asymptotic behavior of the constrained capacity is analyzed as the acceleration factor tends to zero under both fixed transmit-SNR and fixed receive-SNR definitions. It is shown that the constrained capacity for DFT-precoded FTN is fundamentally limited by the constellation size. In addition, the constrained capacity under channel mismatch is studied and a mismatched achievable information rate (AIR) formulation is developed to show the effects of practical constraints on the performance degradation. Finally, adaptive bit loading across eigenchannels is investigated to exploit the higher-quality eigenchannels.
\end{abstract}

\begin{IEEEkeywords}
Constrained capacity, faster-than-Nyquist, achievable information rate, channel mismatch, adaptive bit loading.
\end{IEEEkeywords}

\section{Introduction}

The rapid growth of modern communication demands is unprecedented. As the number of connected devices continues to increase, radio spectrum is becoming an increasingly scarce resource. Early vision documents for sixth-generation (6G) wireless systems describe future service requirements in terms of immersive communication, global broadband, omnipresent IoT, spatio-temporal services, critical services, and compute-AI services \cite{whatshould6Gbe}. These use cases imply a dramatic increase in traffic volume. Consequently, new transmission techniques with higher spectral efficiency are required so that limited spectrum resources can support substantially higher communication rates.

Faster-than-Nyquist (FTN) signaling has emerged as a promising technique for improving spectral efficiency by increasing the transmission rate without expanding the occupied bandwidth. Interest in FTN signaling can be traced back to Mazo's seminal work \cite{mazo}, which showed that symbols can be transmitted faster than the Nyquist rate without compromising the error rate performance. Specifically, FTN increases the symbol rate by accelerating the symbol interval. If $T$ denotes the Nyquist symbol interval, then FTN sends symbols every $\delta T$ seconds, where $\delta \in (0,1]$ is the acceleration factor. 
A key advantage of FTN is that this rate increase is achieved without requiring additional bandwidth, since the pulse shape itself is unchanged and only the signaling rate is increased. Under the same average transmit power constraint, FTN therefore offers the potential for higher spectral efficiency. However, this gain comes at the cost of intentionally introduced intersymbol interference (ISI). Since the Nyquist interval $T$ is originally chosen to satisfy the zero-ISI criterion at the sampling instants, accelerating the symbol interval violates this condition. As a result, FTN signal detection requires more attention.

To reduce the detection complexity caused by FTN-induced ISI, transmit precoding has been widely studied as a practical solution \cite{ebrahimlowcomplx, Ebrahimlowcomplxclass, pinarlowcomplx, yu2017low, 1_R1, property,svd,precodeftnfschnl,eigenvaluedecompftn,timelocalization}. The main idea is to map information symbols onto frequency-domain components and apply precoding at the transmitter, so that after FTN transmission and receiver-side equalization, the system can be interpreted as a set of parallel subchannels. This representation greatly simplifies the FTN model and provides an appealing framework for practical implementation, and is commonly referred to as transmit precoding for FTN \cite{1_R1}. Among the candidate precoding methods, DFT-based precoding is particularly attractive because, with the aid of cyclic prefix (CP) and cyclic suffix (CS), the DFT matrix can diagonalize the FTN ISI matrix \cite{asynmacftn}. Compared with channel-dependent decompositions, the DFT has a fixed structure, low implementation complexity, and a close connection to OFDM, whose hardware and signal-processing architecture are already mature and widely available. Motivated by these advantages, this paper focuses on the constrained capacity of DFT-precoded FTN signaling with finite-alphabet inputs.

In \cite{property}, the authors investigated the eigenvalue distribution of the FTN ISI matrix and derived a capacity expression through eigenvalue decomposition.  Motivated by this perspective, \cite{svd} proposed an FTN transmission scheme based on singular value decomposition (SVD), where the channel is decomposed into parallel subchannels through SVD. Later, in \cite{precodeftnfschnl}, the same idea was extended to FTN transmission over frequency-selective channels, while \cite{eigenvaluedecompftn} applied a similar eigenvalue-decomposition framework to FTN with index modulation. In addition, \cite{timelocalization} proposed a precoding method based on the inverse square root of the FTN ISI matrix. This work also showed that precoded FTN can be optimal and capacity-achieving under Gaussian signaling. {In \cite{ofdmftn}, the authors studied an adaptive bit and power allocation scheme for multicarrier FTN. Similarly, in \cite{ftnadaptivetimedomain}, an adaptive modulation and coding scheme in time domain for FTN transmission was suggested.} Precoded FTN was also studied in other works such as \cite{precftnotfs,linearprecftn,precequalmultipathfad,zhang2026pushinglimitsunlockingpotential}. These studies provide important theoretical insights into precoded FTN signaling, but their capacity results are primarily derived under Gaussian-input assumptions. 

Although Gaussian input assumption is analytically convenient and capacity-achieving for unconstrained channels, it does not accurately represent practical communication systems, where transmitted symbols are drawn from finite-alphabet constellations such as BPSK, QPSK, and QAM. The effect of finite-alphabet constellations is captured by the constrained capacity, which characterizes the maximum achievable mutual information when the channel input is restricted to a prescribed discrete constellation rather than an arbitrary continuous Gaussian distribution. An efficient Monte Carlo method for constrained capacity evaluation was introduced by Ungerboeck \cite{ungerboeckconstrained}.

In the context of FTN signaling, constrained-capacity analysis was considered in \cite{rusek}, where bounds were derived for the case in which the transmitted time-domain symbols are drawn from a finite alphabet. The analysis assumes direct transmission of the information symbols without precoding, and the impact of ISI is addressed at the receiver through equalization. The asymptotic behavior of FTN constrained capacity when $\delta$ approaches 0 for BPSK modulation was further investigated in \cite{asympftn}. However, the paper assumes ideal and perfect equalization at the receiver and assumes sinc pulses, which is also impractical. Due to the large state space FTN induces, receiver equalization is generally highly complex. Instead, transmitter-side precoding can offer a lower-complexity implementation. Motivated by this observation, this paper investigates the constrained capacity of precoded FTN signaling, with a particular focus on DFT-based precoding. By employing CP and CS, the FTN channel can be transformed into a circulant form, allowing the DFT matrix to diagonalize the channel into independent eigenchannels and enabling low-complexity detection. 

In practice, however, the CP and CS introduce transmission overhead and may be omitted or shortened. In this case, the FTN channel is no longer perfectly circulant, and DFT precoding cannot completely diagonalize the channel, resulting in residual inter-eigenchannel interference. This transmitter-receiver incompatibility is commonly referred to as channel mismatch. The theoretical foundation of mismatched decoding and generalized mutual information was established in \cite{airmis}, while related capacity results for mismatched channels were further studied in \cite{capmis}. These tools were applied to evaluate the AIR of various practical decoders in \cite{rusekair,Gokhanmismatch,airbicm}. Following this line of work, in this paper, we also quantify the resulting performance degradation due to channel mismatch, and develop a mismatched achievable information rate (AIR) formulation for mismatched DFT-precoded FTN signaling.




Our main contributions are summarized as follows:
\begin{itemize}
    \item We derive the constrained capacity of FTN signaling with transmitter-side precoding, where finite-alphabet symbols are transmitted.
    \item We analyze the asymptotic behavior of the constrained capacity of DFT-precoded FTN signaling as the acceleration factor $\delta$ approaches $0$ under both fixed transmit-SNR and fixed receive-SNR definitions.
    \item We derive the constrained capacity under a mismatched channel model, which captures the practical degradation caused by imperfect channel diagonalization.
    \item We further investigate adaptive bit loading for DFT-precoded FTN signaling in order to overcome the rate limitation imposed by finite-alphabet inputs.
\end{itemize}
Overall, these contributions demonstrate that DFT-precoded FTN signaling is a promising technology for future communication standards, enabling low complexity FTN implementation.

For notational convenience, the following symbols are used throughout this paper. The superscripts $*$, $T$, and $\dagger$ denote complex conjugate, transpose, and conjugate transpose, respectively. The symbol $\star$ denotes convolution. The notation $(\cdot)_{m,n}$ represents the $(m,n)$th entry of a matrix.  The terms $y(n\delta T)$ and $y[n]$ are used interchangeably to denote the $n$th time-domain sample of the signal $y(t)$.  The operators $\mathrm{tr}(\cdot)$ and $\mathbb{E}[\cdot]$ denote the trace of a matrix and the expectation. Finally, $\bm{I}_N$ represents the $N\times N$ identity matrix.

\section{System Model}\label{sec:systemmodel}

\begin{figure}
    \centering
    \includegraphics[width=1\linewidth]{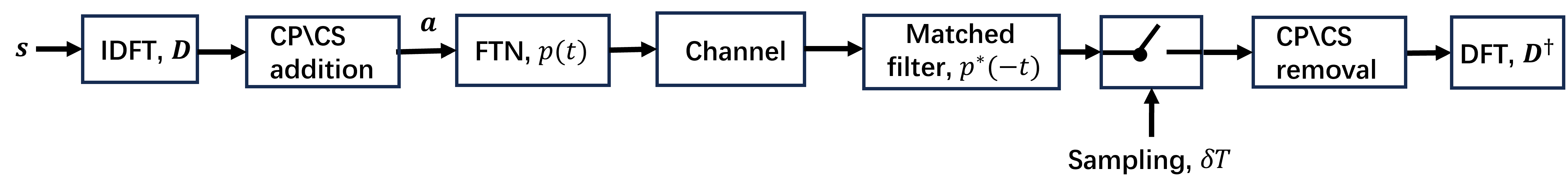}
    \caption{Block diagram of DFT-precoded FTN transmission system. }
    \label{fig:sysmodeldftprecftn}
\end{figure}
Assume that $N$ symbols are transmitted. The independent and identically distributed (i.i.d.) data symbols are denoted by $\bm{s}=[s_0,s_1,\dots,s_{N-1}]^T$, where each symbol is drawn from a finite alphabet $\mathcal{S}$ of size $M$. Typical examples include QPSK with $M=4$ and 16QAM with $M=16$. Throughout this paper, we assume unit-energy constellation symbols, i.e.,
\begin{equation}
    \mathbb{E}[|s_i|^2]=1, \label{eqn:dataunitenj}
\end{equation}
Let $E_s$ denote the allocated symbol energy. Under uniform power allocation, the frequency-domain symbol vector is precoded by the DFT matrix before pulse shaping as shown in Fig.~\ref{fig:sysmodeldftprecftn}.
Let $\bm{D}$ denote the $N\times N$ normalized DFT matrix, whose entries are given by
\begin{equation}
    (\bm{D})_{k,n}=\frac{1}{\sqrt{N}}e^{-j\frac{2\pi}{N}kn},
    \qquad k,n=0,\dots,N-1. \label{eqn:dftmatterms}
\end{equation}
The precoded transmit vector is then written as
\begin{equation}
    \bm{a}=\sqrt{E_s}\bm{D}\bm{s}.
\end{equation} The precoded symbols $a_n$, $n=0,\dots,N-1$, are then pulse-shaped by $p(t)$ and transmitted at the accelerated symbol interval $\delta T$, where $T$ denotes the Nyquist symbol period and $\delta \in (0,1]$ is the acceleration factor. The derivations in this paper only require the pulse-shaping filter $p(t)$ to be of unit-energy and band-limited with bandwidth $W$. For instance, the pulse-shaping filter $p(t)$ can be a root-raised cosine (RRC) pulse with a roll-off factor $\beta$. Moreover, we have $\bm{\Sigma}_a=\mathbb{E}[\bm{a}\bm{a}^\dagger]=E_s\bm{D}\mathbb{E}[\bm{s}\bm{s}^\dagger]\bm{D}^\dagger=E_s\bm{D}\bm{D}^\dagger$. 

For simplicity, we consider an additive white Gaussian noise (AWGN) channel. At the receiver, the matched filter $p^*(-t)$ is employed. The output of the matched filter is defined as
\begin{equation}
    g(t)=p(t)\star p^*(-t).
\end{equation}
Then, the continuous-time input-output relationship can be written as
\begin{align}
    {y}(t)
    &= \sum_{n=0}^{N-1} a_n g(t-n\delta T) + n(t) \\
    &= x(t) + n(t),
\end{align}
where ${y}(t)$ is the received signal, $x(t)$ denotes the transmitted signal after pulse shaping, and $n(t)$ is AWGN with power spectral density $\sigma_0^2$. After matched filtering, the filtered noise process is denoted by
    $\eta(t)=n(t)\star p^*(-t)$.
The matched-filter output is sampled every $\delta T$ seconds. Therefore, the discrete-time system model can be expressed as
\begin{equation}
    {\bm{y}}=\bm{G}\bm{a}+\bm{\eta},
\end{equation}
where
\begin{align}
    {\bm{y}} &= [{y}(0),\, {y}(\delta T),\, \dots,\, {y}((N-1)\delta T)]^T, \\
    \bm{a} &= [a_0,\, a_1,\, \dots,\, a_{N-1}]^T, \\
    \bm{\eta} &= [\eta(0),\, \eta(\delta T),\, \dots,\, \eta((N-1)\delta T)]^T,
\end{align}
and $\bm{G}$ is the ISI matrix whose entries are given by
    \begin{equation}
        (\bm{G})_{i,j}=g((i-j)\delta T), \qquad i,j=0,\dots,N-1.\label{eqn:gmatterms}
    \end{equation} 
It follows that $\bm{G}$ is an $N\times N$ Hermitian Toeplitz matrix. Moreover, the noise vector $\bm{\eta}$ is correlated, with the covariance matrix
\begin{equation}
    \mathbb{E}[\bm{\eta}\bm{\eta}^\dagger]=\sigma_0^2\bm{G}.
\end{equation}

\subsection{DFT-Precoded FTN}

In practice, the composite pulse response becomes negligible beyond a finite number of samples. We therefore assume that there exists an integer $L$ such that
\begin{equation}
    g[n]=0, \qquad |n|>L.
\end{equation}
Since the DFT matrix is able to perfectly diagonalize a cyclic matrix \cite{gray},  CP and CS of length $L$ are appended to $\bm{a}$ before pulse shaping, yielding
\begin{equation}
    [
\underbrace{a_{N-L}, a_{N-L+1}, \ldots, a_{N-1}}_{\mathrm{CP}},
a_0,\ldots,a_{N-1},
\underbrace{a_0,\ldots,a_{L-1}}_{\mathrm{CS}}
].
\end{equation}
After matched filtering and sampling, a total of $N+2L$ samples are obtained at the receiver, and the first $L$ and last $L$ samples are discarded.
Due to this operation, the equivalent channel model can be expressed as
\begin{equation}
    \bm{y}_c=\bm{G}_c\bm{a}+\bm{\eta},
\end{equation}
where $\bm{G}_c$ is an $N\times N$ circulant matrix. Its first row is given by
\begin{equation}
    [g(0),\, g(-\delta T),\, \dots,\, g(-L\delta T),\, 0,\, \dots,\, 0,\, g(L\delta T),\, \dots,\, g(\delta T)].
\end{equation}
Since $\bm{G}_c$ is circulant, it can be diagonalized by the DFT matrix \cite{gray} as
\begin{equation}
    \bm{G}_c=\bm{D}\bm{\Lambda}\bm{D}^\dagger,
    \label{eqn:Gcdecomp}
\end{equation}
where $\bm{\Lambda}=\mathrm{diag}\{\lambda_0,\lambda_1,\dots,\lambda_{N-1}\}$ contains the eigenvalues of $\bm{G}_c$.

At the receiver, the sampled matched-filter output $\bm{y}$ is multiplied by $\bm{D}^\dagger$ so that the channel is represented in the frequency domain. Define
\begin{equation}
    \tilde{\bm{y}}\triangleq[\tilde{y}[0],\tilde{y}[1],\dots,\tilde{y}[N-1]]^T.
\end{equation}
Then,
\begin{align}
    \tilde{\bm{y}}
    &=\bm{D}^\dagger\bm{y}_c \notag\\
    &=\bm{D}^\dagger\bm{G}_c\bm{a}+\bm{D}^\dagger\bm{\eta} \notag\\
    &=\sqrt{E_s}\bm{D}^\dagger\bm{G}_c\bm{D}\bm{s}+\bm{D}^\dagger\bm{\eta} \notag\\
    &\overset{(a)}{=}\sqrt{E_s}\bm{\Lambda}\bm{s}+\bm{\omega},
    \label{eqn:commchnldecomp}
\end{align}
where (a) follows from \eqref{eqn:Gcdecomp}, and 
    $\bm{\omega}\triangleq\bm{D}^\dagger\bm{\eta}$.
The covariance matrix of $\bm{\omega}$ is
    $\mathbb{E}[\bm{\omega}\bm{\omega}^\dagger]
    =\bm{D}^\dagger\mathbb{E}[\bm{\eta}\bm{\eta}^\dagger]\bm{D}
    =\sigma_0^2\bm{\Lambda}$.
Therefore, the components $\omega_i$ are independent, and the FTN channel is decomposed into $N$ parallel eigenchannels. The $i$th eigenchannel is given by
\begin{equation}
    \tilde{y}[i]=\sqrt{E_s}\lambda_i s_i+\omega_i,
    \qquad i=0,\dots,N-1,
    \label{eqn:eigenchnlexpression}
\end{equation}
where
    $\omega_i\sim\mathcal{CN}(0,\lambda_i\sigma_0^2)$.

As discussed in \cite{property}, the eigenvalues $\lambda_i$ can be approximated by the samples of the folded spectrum $G_d(f_n)$ over one period. 
The
folded spectrum is the discrete time Fourier transform of the sampled pulse autocorrelation
\(g[\ell]=g(\ell\delta T)\), and can be expressed as
\begin{align}
    G_d(f_n)&=\sum_{\ell=-L}^{L}g[\ell]e^{j2\pi f_n\ell} \\
    &=\frac{1}{\delta T}\sum_{k=-\infty}^{+\infty}G\left(\frac{f_n-k}{\delta T}\right),    \quad f_n\in \left[0,1\right],
\end{align}
where $G(f)$ is the continuous-time Fourier transform of $g(t)$. As an example, we plot one period of the folded-spectrum in Fig.~\ref{fig:eigenval} for an RRC pulse. The folded spectrum is periodic with period 1. The length of the support of the folded spectrum is $\max(\delta TW, 1)$ \cite{zhang2022faster}.
Specifically,
\begin{equation}
    \lambda_\ell \approx \frac{1}{\delta T}\sum_{k=-\infty}^{+\infty}
    G\!\left(\frac{\ell}{N\delta T}-\frac{k}{\delta T}\right)
    \triangleq G_d\left(\frac{\ell}{N}\right).
    \label{eqn:lambsampleG}
\end{equation}
The fraction of nonzero eigenvalues is approximately determined by the support of $G_d(f_n)$ over one period \cite{property}.
\begin{figure}
    \centering
    \includegraphics[width=0.8\linewidth]{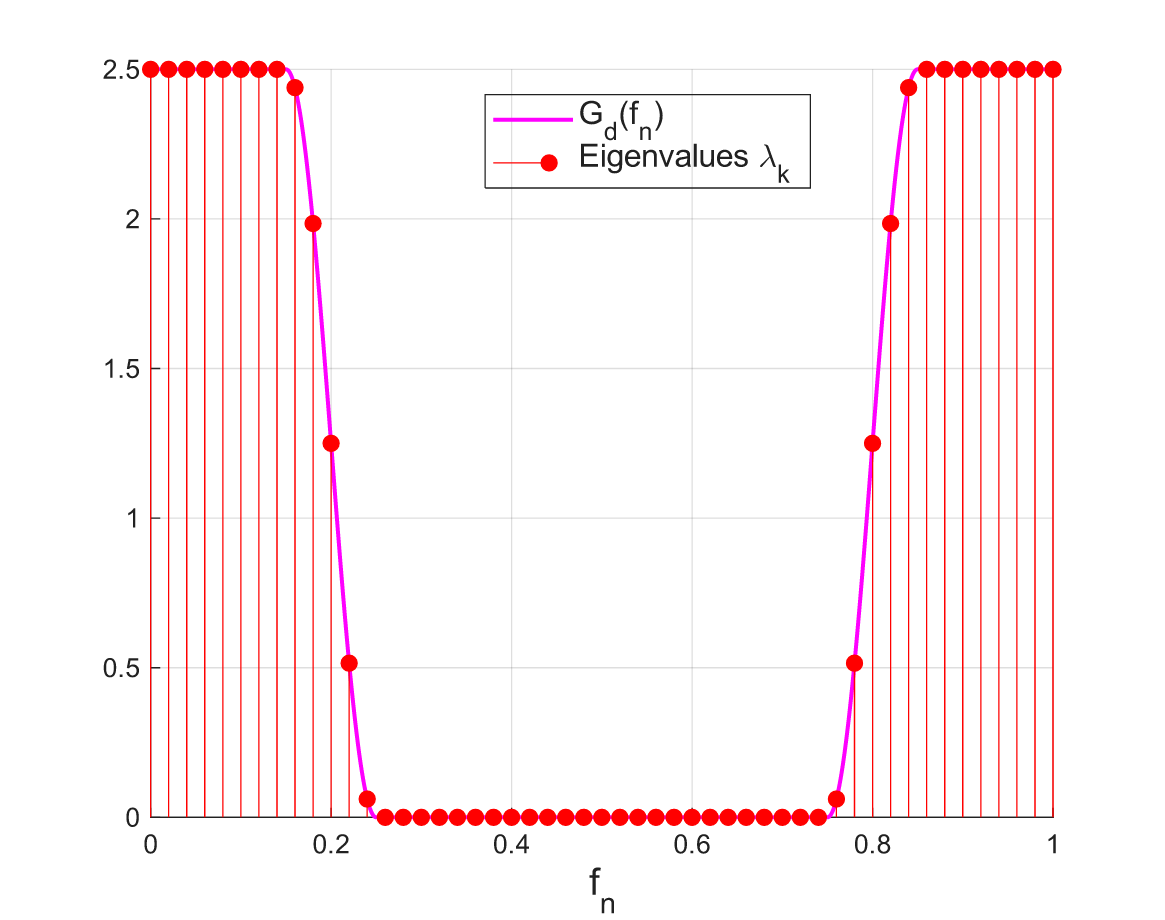}
    \caption{The folded spectrum $G_d(f_n)$ of an RRC pulse with $\delta=0.4$ and $\beta=0.25$ over one normalized frequency period of $[0,1)$. The corresponding eigenvalues $\lambda_i$ of the circulant FTN ISI matrix $\bm{G}_c$ are the samples of $G_d(f_n)$.}
    \label{fig:eigenval}
\end{figure}
Let $N_r$ denote the number of nonzero eigenvalues. As can be seen in Fig.~\ref{fig:eigenval}, when $\delta TW<1$, there will be zero samples or zero eigenvalues over one period and for sufficiently large $N$, we can approximate the fraction
\begin{equation}
    \frac{N_r}{N}\approx \min\bigl(1,\delta TW\bigr).
    \label{eqn:NandNred}
\end{equation}
Hence, when $\delta TW<1$, only $N_r$ eigenchannels are active, while the remaining $N-N_{{r}}$ eigenchannels are effectively nulled by zero eigenvalues.  
Therefore, we transmit symbols only on the active eigenchannels, which belong to the index set 
\begin{align}
    \mathcal{D}=\left\{0, \dots, \frac{N_r}{2},  N-\frac{N_r}{2}, \dots, N-1\right\}. \label{eqn:symmodulateindex}
\end{align}

\subsection{Power Constraint and SNR Definitions}\label{sec:powtxrx}

We next investigate the power constraint used for the
DFT-precoded FTN system. 
For FTN signaling with transmit power $P_{TX}$, sending $N$ symbols over a duration of $N\delta T$ yields a total energy assumption of $NP_{TX}\delta T$, so the average energy per symbol is $E_s=P_{TX}\delta T$. Therefore, for fixed power, decreasing $\delta$ reduces the energy allocated to each symbol, which results in a shorter minimum Euclidean distance and hence a higher error probability for practical constellations. To fairly evaluate FTN performance under different $\delta$, two SNR definitions are required: the transmit SNR, $\mathsf{SNR_{tx}}=\frac{P_{TX}}{\sigma_0^2}$, and the received SNR, $\mathsf{SNR_{rx}}=\frac{E_s/T}{\sigma_0^2}=\frac{P_{TX}\delta}{\sigma_0^2}$. These two definitions coincide only for Nyquist signaling $(\delta=1)$, whereas if $\delta\neq1$ they must be distinguished \cite{zhang2025iapr}. 

Assume that the physical transmit-power
limit is \(P_{ TX}\). For sufficiently large \(N\), the overhead due
to the CP and CS is neglected in the following
power analysis. The average transmit-power constraint can then be written as
\begin{align}
    P_{TX}
    &=\mathbb{E}\left[\frac{1}{N\delta T}\int_{-\infty}^{\infty}|x(t)|^2dt\right] \notag\\
    &=\frac{1}{N\delta T}\mathrm{tr}(\bm{G}_c\bm{\Sigma}_a),\notag\\
    &\overset{(a)}{=}
    \frac{E_s}{N\delta T}\mathrm{tr}(\bm{\Lambda}),
    \label{eqn:powconst}
\end{align}
where (a) follows from \eqref{eqn:Gcdecomp} and \eqref{eqn:dataunitenj}.

To satisfy the power constraint, the symbol energy $E_s$ must satisfy
\begin{equation}
    E_s=\frac{NP_{TX}\delta T}{\sum_{d'=0}^{N_r-1}\lambda_{d'}}. \label{eqn:symbolematch}
\end{equation}
To avoid
confusion with the global time-domain SNR definitions, the effective frequency-domain SNR of
the \(d\)th active eigenchannel is denoted by \(\gamma_d\). Substituting \eqref{eqn:symbolematch} into \eqref{eqn:eigenchnlexpression} gives
\begin{equation}
    \gamma_d
    =\frac{\lambda_d^2E_s}{\lambda_d\sigma_0^2}
    =\frac{\lambda_d NP_{TX}\delta T}
    {\sigma_0^2\sum_{d'=0}^{N_r-1}\lambda_{d'}}.
    \label{eqn:snrichannelabthres}
\end{equation}
Furthermore, when $\delta TW<1$, using \eqref{eqn:NandNred}, \eqref{eqn:snrichannelabthres} can be rewritten as
\begin{equation}
    \gamma_{d}
    =\frac{\lambda_d N_rP_{TX}}
    {\sigma_0^2W\sum_{d'=0}^{N_r-1}\lambda_{d'}}. \label{eqn:snrichannelblthres}
\end{equation}
\begin{remark}\label{remark:txrxsnr}
    For the rest of the paper, conclusions derived under fixed $\mathsf{SNR_{tx}}$ can be converted to fixed $\mathsf{SNR_{rx}}$ condition by simply replacing $P_{TX}$ with $\delta P_{TX}$.
\end{remark}

\section{Constrained Capacity Derivation} \label{sec:conscapacityderivation}

  Different from the Gaussian capacity, constrained capacity assumes that the channel input is restricted to a finite
constellation set \(\mathcal{S}\). It
therefore characterizes the maximum achievable information rate with ideal channel coding for practical systems. In other words, it gives the information-theoretic limit
that coded modulation schemes can
approach. In this section, we derive the constrained capacity of the DFT-precoded FTN
system



Using the constrained capacity result for the scalar AWGN channel given in \cite{ungerboeckconstrained}, 
the constrained capacity of the $d$th eigenchannel in \eqref{eqn:eigenchnlexpression} 
can be written as
\begin{align}
    &C_d(\delta)
    =
    \log_2 M
    -\frac{1}{M}\sum_{m=0}^{M-1}
    \int_{-\infty}^{+\infty}
    \frac{\exp\!\left(-\frac{\omega^2}{2\lambda_d\sigma_0^2}\right)}
    {2\pi\lambda_d\sigma_0^2}
    \notag\\
    & \times
    \log_2\!\left(
    \sum_{\ell=0}^{M-1}
    \exp\!\left(
    -\frac{
    |u-\lambda_d\sqrt{E_s}(s[m]-s[\ell])|^2-u^2
    }{2\lambda_d\sigma_0^2}
    \right)
    \right)du.
    \label{eqn:itheigenchnlcap}
\end{align}
The constrained capacity of SISO FTN with transmitter-side precoding is then obtained by summing the information carried over all active eigenchannels and averaging it over the transmitted time-domain symbols, namely,
\begin{equation}
    C(\delta)=\frac{1}{N}\sum_{d=0}^{N_{{r}}-1} C_d(\delta).
    \label{eqn:persymcap}
\end{equation}
Note that \eqref{eqn:persymcap} is measured in bits per transmitted symbol. 
Then, the constrained capacity in bits/s/Hz is given by
\begin{equation}
    C'(\delta)=\frac{1}{\delta TW}C(\delta).
    \label{eqn:capbpsphz}
\end{equation}
Substituting \eqref{eqn:NandNred} into \eqref{eqn:capbpsphz}, we obtain the following theorem.
{\begin{theorem}
\label{thm:conscap_dft_ftn}
For the DFT-precoded FTN system with CP/CS and finite-alphabet input 
$s_d\in\mathcal{S}$, $|\mathcal{S}|=M$, the constrained capacity in bits/s/Hz is
\begin{equation}
    C'(\delta)
    =
    \frac{1}{N_{{r}}}
    \sum_{d=0}^{N_{{r}}-1} C_d(\delta),
    \label{eqn:aveovereigenchnl}
\end{equation}
where $C_d(\delta)$ is the constrained capacity of the $d$th active eigenchannel given in \eqref{eqn:itheigenchnlcap}. 
\end{theorem}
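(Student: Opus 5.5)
The plan is to start from the eigenchannel decomposition \eqref{eqn:eigenchnlexpression} and establish \eqref{eqn:persymcap}. Then \eqref{eqn:capbpsphz} and \eqref{eqn:NandNred} convert the per-symbol rate into bits/s/Hz.

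\textbf{Step 1: the $N$-symbol block splits into independent scalar channels.} After CP/CS insertion and removal and the DFT post-processing, the channel is $\tilde{\bm{y}}=\sqrt{E_s}\bm{\Lambda}\bm{s}+\bm{\omega}$, with $\mathbb{E}[\bm{\omega}\bm{\omega}^\dagger]=\sigma_0^2\bm{\Lambda}$. Because $\bm{\omega}=\bm{D}^\dagger\bm{\eta}$ is jointly Gaussian and its covariance is diagonal, its components are independent. The symbols $s_d$ are i.i.d.\ and uniform on $\mathcal{S}$. The transform $\bm{D}^\dagger$ is unitary, hence invertible, so no information is lost:
\begin{equation*}
I(\bm{s};\bm{y}_c)=I(\bm{s};\tilde{\bm{y}}).
\end{equation*}
Since $\tilde{y}[d]$ depends only on $s_d$ and $\omega_d$, the chain rule gives
\begin{equation*}
I(\bm{s};\tilde{\bm{y}})=\sum_d I(s_d;\tilde{y}[d]).
\end{equation*}
Eigenchannels with $\lambda_d=0$ carry nothing, so only the $N_r$ active indices in $\mathcal{D}$ contribute. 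For each active $d$, the scalar term $I(s_d;\tilde{y}[d])$ with uniform finite-alphabet input is exactly Ungerboeck's expression. The noise variance is $\lambda_d\sigma_0^2$ and the effective amplitude is $\lambda_d\sqrt{E_s}$, which gives $C_d(\delta)$ in \eqref{eqn:itheigenchnlcap}, evaluated at the $E_s$ fixed by the power constraint \eqref{eqn:symbolematch}. Dividing by the $N$ time-domain symbols per block yields $C(\delta)$ in \eqref{eqn:persymcap}.

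\textbf{Step 2: convert to bits/s/Hz.} Symbols are sent every $\delta T$ seconds over bandwidth $W$, so there are $1/(\delta T W)$ symbols per second per hertz, which gives \eqref{eqn:capbpsphz}. This yields
\begin{equation*}
C'(\delta)=\frac{1}{N\delta TW}\sum_{d}C_d(\delta).
\end{equation*}
Substituting $N\delta TW\approx N_r$ from \eqref{eqn:NandNred}, valid in the regime $\delta TW<1$ with $N$ large, gives \eqref{eqn:aveovereigenchnl}.

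\textbf{Main obstacle.} I expect the main obstacle to be the approximations and the regime. \eqref{eqn:NandNred} holds only asymptotically in $N$, and the CP/CS overhead $2L/N$ is neglected. I would therefore state the result as exact per block up to the factor $N\delta TW/N_r$, which tends to $1$ as $N\to\infty$.

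A second issue arises when $\delta TW\ge 1$. There $N_r=N$, and the normalization in \eqref{eqn:aveovereigenchnl} would actually carry an extra factor $1/(\delta TW)$. I would either restrict the theorem to $\delta TW<1$, which is the FTN regime of interest, or note this factor explicitly.

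A smaller point to justify is that the circular (real-valued) Gaussian form in \eqref{eqn:itheigenchnlcap} applies per real dimension. For complex constellations, the one-dimensional integral should be read as the corresponding complex integral with density normalization $1/(\pi\lambda_d\sigma_0^2)$.
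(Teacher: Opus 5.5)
Your proposal is correct and follows the paper's own derivation---per-eigenchannel Ungerboeck capacity \eqref{eqn:itheigenchnlcap}, summation over the $N_r$ active eigenchannels normalized by $N$ as in \eqref{eqn:persymcap}, the $1/(\delta TW)$ conversion \eqref{eqn:capbpsphz}, and the substitution $N_r\approx N\delta TW$ from \eqref{eqn:NandNred}---and it makes explicit the additivity step (unitary $\bm{D}^\dagger$ plus independence of the pairs $(s_d,\omega_d)$) that the paper leaves implicit; your caveats are also right, since for $\delta TW\ge 1$ one has $N_r=N$ and the stated formula misses the factor $1/(\delta TW)$, and \eqref{eqn:itheigenchnlcap} is consistent with $\omega_d\sim\mathcal{CN}(0,\lambda_d\sigma_0^2)$ only when read as a two-dimensional integral against that density. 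One qualification to your ``exact per block'' remark: the diagonal noise covariance $\sigma_0^2\bm{\Lambda}$ is inherited from the paper's model, but CP/CS circularizes only the signal term (the retained noise samples still have Toeplitz covariance $\sigma_0^2\bm{G}$), so the independence of the $\omega_d$ is itself a large-$N$ idealization, just like \eqref{eqn:NandNred}.
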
}
\begin{remark}\label{rem:cpcsrem}
    For sufficiently large \(N\), the overhead factor \(N/(N+2L)\) approaches one when \(L\) is fixed
relative to \(N\). If finite-block overhead is explicitly counted, the
final spectral efficiency expressions in this section should be
multiplied by \(N/(N+2L)\).
\end{remark}


Theorem~\ref{thm:conscap_dft_ftn} is also consistent with the $2WT$ theorem discussed in \cite{kim2026fasterthannyquistsignalingfinitetimebandwidth}, which states that packing more FTN symbols does not necessarily increase the signaling dimensions, the number of effective signaling dimensions is determined by the occupied time bandwidth product $N\delta WT$, which is exactly equal to $N_r$. Theorem~\ref{thm:conscap_dft_ftn} conveys the same insight: the FTN block is decomposed into $N$ frequency-domain eigenchannels. However, when $\delta(1+\beta)<1$, only about $N_r$ of them have nonzero eigenvalues and can effectively carry information, although $N$ time-domain FTN samples are transmitted. 
\begin{corollary}\label{thm:asympftntx}
    Under fixed $\mathsf{SNR_{tx}}$, the constrained capacity of DFT-precoded FTN given in Theorem~\ref{thm:conscap_dft_ftn} approaches a constant value as $\delta\to0$. 
\end{corollary}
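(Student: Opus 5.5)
Since $C'(\delta)$ in Theorem~\ref{thm:conscap_dft_ftn} is the arithmetic mean of the per-eigenchannel capacities $C_d(\delta)$, and each $C_d$ in \eqref{eqn:itheigenchnlcap} depends only on the effective SNR $\gamma_d$ through a fixed, continuous and monotone function of SNR for the given constellation $\mathcal{S}$, the plan is to show that every $\gamma_d$, and the empirical distribution of the $\gamma_d$ over the active set $\mathcal{D}$, converges to a $\delta$-independent limit as $\delta\to0$. Concretely, writing $f_{\mathcal{S}}(\gamma)$ for the scalar AWGN constrained capacity at SNR $\gamma$ (obtained from \eqref{eqn:itheigenchnlcap} after normalizing $u$ by $\sqrt{\lambda_d}\sigma_0$, which shows that $C_d$ depends only on $\lambda_d E_s/\sigma_0^2=\gamma_d$), we have $C'(\delta)=\frac{1}{N_r}\sum_{d\in\mathcal{D}} f_{\mathcal{S}}(\gamma_d)$.

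First, we restrict to $\delta<1/(TW)$, so that \eqref{eqn:snrichannelblthres} applies. This gives $\gamma_d=\frac{P_{TX}}{\sigma_0^2 W}\cdot\frac{\lambda_d}{\bar\lambda}$, where $\bar\lambda=\frac{1}{N_r}\sum_{d'}\lambda_{d'}$. Under fixed $\mathsf{SNR_{tx}}$ the prefactor is constant, so only the normalized eigenvalue profile $\lambda_d/\bar\lambda$ matters.

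Second, we use \eqref{eqn:lambsampleG}. When $\delta TW<1$ the aliased replicas in the folded spectrum do not overlap, so on one period $G_d(f)=\frac{1}{\delta T}G\!\left(\frac{f}{\delta T}\right)$ for the in-band frequencies. The active indices in $\mathcal{D}$ correspond to $f/(\delta T)$ ranging over the support $[-W/2,W/2]$ of $G$. Hence $\lambda_d=\frac{1}{\delta T}G(\phi_d)$, where $\phi_d$ are $N_r$ roughly uniformly spaced points on the support of $G$. The factor $1/(\delta T)$ cancels in $\lambda_d/\bar\lambda$, which leaves $G(\phi_d)/\frac{1}{N_r}\sum G(\phi_{d'})$. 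As $N_r\to\infty$, the denominator tends to $\frac{1}{W}\int G(f)\,df=\frac{g(0)}{W}=\frac{1}{W}$ by unit energy of $p$. Therefore $\gamma_d\to\frac{P_{TX}}{\sigma_0^2}G(\phi_d)$, which is independent of $\delta$.

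Third, we pass to the limit in the average. Because $f_{\mathcal{S}}$ is bounded by $\log_2 M$ and continuous, the Riemann sum converges:
\begin{equation}
C'(\delta)\to\frac{1}{W}\int_{-W/2}^{W/2} f_{\mathcal{S}}\!\left(\mathsf{SNR_{tx}}\,G(f)\right)df,
\end{equation}
a constant that does not depend on $\delta$ and is at most $\log_2 M$. In fact, for every $\delta<1/(TW)$ the same Riemann sum appears, so $C'(\delta)$ is essentially constant in this whole regime, not merely convergent.

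The main obstacle is justifying the limit order. We need $N$ large enough, relative to $\delta$, that $N_r=\delta TWN$ still grows without bound, so the Riemann-sum and eigenvalue approximations \eqref{eqn:lambsampleG} and \eqref{eqn:NandNred} stay valid as $\delta\to0$. The approach is to take $N\to\infty$ first for each fixed $\delta$. In that setting, Szegő-type results for circulant and Toeplitz matrices \cite{gray} give weak convergence of the eigenvalue distribution. One must also check that the $L$-truncation of $g[n]$ remains harmless, since $L$ scales like $1/\delta$; this is consistent with Remark~\ref{rem:cpcsrem} provided $N\gg L$.
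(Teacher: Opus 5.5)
Your proposal is correct and follows essentially the same route as the paper's appendix proof. Both arguments approximate the sum of the active eigenvalues by a Riemann integral of $G$, so that the per-eigenchannel SNR $\lambda_d E_s/\sigma_0^2$ becomes independent of $\delta$, and then average the scalar constrained capacities over the active eigenchannels as a Riemann sum, which yields a $\delta$-free integral over $[-W/2,W/2]$. Your use of $\int G(f)\,df=g(0)=1$ to simplify the denominator, and your remarks on taking $N\to\infty$ first with truncation length $L\sim 1/\delta$, are modest refinements that the paper leaves implicit.
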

\begin{proof}
    See Appendix~\ref{app:asympconscaptxsnr} for the proof.
\end{proof}
\begin{corollary}\label{thm:asymprxsnrconscap}
     Under fixed $\mathsf{SNR_{rx}}$, the constrained capacity of DFT-precoded FTN approaches $\log_2M$ as $\delta\to0$.
\end{corollary}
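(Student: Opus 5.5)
Following Remark~\ref{remark:txrxsnr}, fixing $\mathsf{SNR_{rx}}$ amounts to replacing $P_{TX}$ by $P_{TX}/\delta$ in the per-eigenchannel SNR, where $P_{TX}$ is now understood as $\sigma_0^2\mathsf{SNR_{rx}}$, a constant independent of $\delta$. For $\delta TW<1$, which holds for all sufficiently small $\delta$, \eqref{eqn:snrichannelblthres} then becomes
\begin{equation}
\gamma_d=\frac{\lambda_d N_r P_{TX}}{\delta\,\sigma_0^2 W\sum_{d'=0}^{N_r-1}\lambda_{d'}}
=\frac{1}{\delta}\cdot\frac{\mathsf{SNR_{rx}}}{W}\cdot\frac{\lambda_d}{\bar\lambda},
\qquad \bar\lambda\triangleq\frac{1}{N_r}\sum_{d'=0}^{N_r-1}\lambda_{d'} .
\end{equation}
The plan is to show that the normalized gains $\lambda_d/\bar\lambda$ on the active set $\mathcal{D}$ converge to a $\delta$-independent profile that is bounded away from zero on most of $\mathcal{D}$. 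This forces $\gamma_d\to\infty$ for those eigenchannels. Each $C_d$ in \eqref{eqn:itheigenchnlcap} is the finite-alphabet AWGN mutual information, and it depends only on $\gamma_d$, since it is invariant to rescaling $\tilde y[d]$ by $\lambda_d$. This quantity increases monotonically to $\log_2 M$ as $\gamma_d\to\infty$ and is bounded above by $\log_2M$. Theorem~\ref{thm:conscap_dft_ftn} writes $C'(\delta)$ as an average of the $C_d$, so the conclusion then follows by dominated convergence.

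The first key step is to characterize the profile. By \eqref{eqn:lambsampleG}, when $\delta TW<1$ the folded spectrum over one period involves no aliasing. We therefore have $\lambda_\ell\approx \frac{1}{\delta T}G\!\left(\frac{\ell}{N\delta T}\right)$ for the active indices, with the negative frequencies wrapped around. Setting $f=\ell/(N\delta T)$, the active indices sweep the band $[-W/2,W/2]$ of $G(f)=|P(f)|^2$ uniformly as $N\to\infty$. The factor $1/(\delta T)$ cancels in $\lambda_d/\bar\lambda$, and $\bar\lambda\to\frac{1}{\delta T}\cdot\frac{1}{W}\int G(f)\,df=\frac{1}{\delta TW}$ because $p$ has unit energy. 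Hence $\lambda_d/\bar\lambda\to W\,G(f_d)$, a profile fixed by the pulse alone. Substituting, we obtain $\gamma_d\approx \mathsf{SNR_{rx}}\,G(f_d)/\delta$, and $C'(\delta)\approx\frac{1}{W}\int_{-W/2}^{W/2}C_{\mathcal{S}}\!\left(\mathsf{SNR_{rx}}G(f)/\delta\right)df$, where $C_{\mathcal{S}}(\gamma)$ is the scalar constrained capacity at SNR $\gamma$.

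The main obstacle is the band edges. For an RRC pulse, $G(f)$ decays to zero in the roll-off region, so a vanishing fraction of eigenchannels can have arbitrarily small gain, and the definition of ``active'' in \eqref{eqn:NandNred} is itself approximate. I would handle this with the integral form. For every fixed $f$ with $G(f)>0$, the integrand tends to $\log_2M$ as $\delta\to0$, and the integrand is bounded by $\log_2M$. The set where $G(f)=0$ inside the nominal support has measure zero, since it consists only of the edge points. Dominated convergence therefore gives $C'(\delta)\to\log_2M$. For a direct discrete argument, I would fix $\epsilon>0$ and choose $\eta$ so that the fraction of active indices with $\lambda_d/\bar\lambda<\eta$ is below $\epsilon$. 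Then $\delta$ is chosen small enough that $C_{\mathcal{S}}(\mathsf{SNR_{rx}}\eta/(W\delta))>\log_2M-\epsilon$. This yields $C'(\delta)\ge(1-\epsilon)(\log_2M-\epsilon)$, while the upper bound $C'(\delta)\le\log_2M$ is immediate.
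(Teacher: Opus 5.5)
Your argument is correct and has the same skeleton as the paper's proof in Appendix~B. With $\mathsf{SNR_{rx}}$ fixed, the symbol energy does not shrink with $\delta$, while the active eigenvalues scale like $G(f_d)/(\delta T)$. Hence $\gamma_d\approx \mathsf{SNR_{rx}}G(f_d)/\delta\to\infty$, and each eigenchannel's finite-alphabet rate saturates at $\log_2M$. Your reading of Remark~\ref{remark:txrxsnr} as $P_{TX}\mapsto P_{TX}/\delta$ is the one consistent with the paper's conclusion that $E_s$ is independent of $\delta$.

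Where you differ is in which step gets the rigorous treatment. The paper spends its effort on the scalar limit $C_d\to\log_2M$ as $\gamma_d\to\infty$. It dominates $\phi(z)L_{m,d}(z)$ on the tails by the integrable $\phi(z)\bigl(\log_2M+z^2/(2\ln 2)\bigr)$, and shows $\sup_{|z|\le A}L_{m,d}(z)\to0$ on the bounded interval. It then simply averages over ``every active eigenchannel.''

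You take that scalar limit as known, together with monotonicity in $\gamma$ (which follows from degradedness or I-MMSE). Your effort goes instead into the averaging, which is the step the paper passes over. The active index set moves with $\delta$, and the roll-off eigenchannels have gains arbitrarily close to zero. So convergence is not uniform in $d$, and convergence for each eigenchannel separately does not by itself give convergence of the mean. Your dominated-convergence argument over the profile $G(f)$, or the $\epsilon$--$\eta$ counting argument, closes that gap.

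To make your proof self-contained, you should cite or reproduce the scalar saturation result (the paper's tail/compact split works as is). You should also note that the measure-zero claim needs $G(f)>0$ almost everywhere on $[-W/2,W/2]$. That holds for RRC, but not for an arbitrary band-limited pulse.
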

\begin{proof}
    See Appendix~\ref{app:app2} for the proof.
\end{proof}
\begin{remark}
    As $\delta$ decreases, symbols are packed more densely in time. However, under fixed $\mathsf{SNR_{tx}}$, the transmit power $P_{TX}$ is kept fixed, and thus the energy allocated to each time-domain symbol decreases with $\delta$. Therefore, the potential gain from higher symbol rate is suppressed by the reduced symbol energy. As can be observed from \eqref{eqn:asympsnrtx}, the effective SNR of each active eigenchannel remains finite. Consequently, the constrained capacity approaches a constant as $\delta$ approaches $0$. In contrast, under fixed $\mathsf{SNR_{rx}}$, the received symbol energy is kept the same as $\delta$ decreases, which is equivalent to allowing the power level $P_{TX}/\delta$ to grow unbounded. In this case, the effective SNR of the active eigenchannels increases without bound, and the rate is eventually limited only by the finite constellation size.
\end{remark}

\section{Achievable Information Rate with Channel Mismatch}\label{sec:diffpowallo}

Although the introduction of CP and CS enables DFT matrix to decompose the FTN channel to independent parallel eigenchannels, they reduce spectral efficiency.
When no CP or CS is used, the finite-block FTN matrix $\bm{G}$ is only Toeplitz rather than circulant. Therefore, the DFT matrix does not diagonalize the matrix $\bm{G}$. Consequently,
residual inter-eigenchannel interference remains after receiver processing. In this case, the receiver still treats the
subchannels as if they are independent, although the actual channel
contains residual interference which is correlated among eigenchannels. Therefore, the 
performance measure is no longer the constrained capacity of the ideal
matched eigenchannel model. Instead, it is the AIR induced by the mismatched decoding metric. This mismatched AIR quantifies the reliable achievable rate as if a very good channel code exists. Thus, 
it provides a direct measure of the performance degradation
caused by imperfect DFT diagonalization in a coded FTN system.

Under receiver mismatch, the transformed channel matrix is
\begin{equation}
\mathbf{\Gamma} \triangleq \mathbf{D}\mathbf{G}\mathbf{D}^{H}.\label{eqn:gammamat}
\end{equation}
Since $\mathbf{G}$ is Toeplitz, the $(k,\ell)$th entry of $\mathbf{\Gamma}$ is
\begin{equation}
\Gamma_{k,\ell}
=
\sum_{m=0}^{N-1}\sum_{n=0}^{N-1}
(\mathbf{D})_{k,m}\,(\bm{G})_{m,n}\,(\mathbf{D}^{H})_{n,\ell}. \label{eqn:termgammamat}
\end{equation}
Substituting \eqref{eqn:dftmatterms} and \eqref{eqn:gmatterms} into \eqref{eqn:termgammamat} gives
\begin{equation}
\Gamma_{k,\ell}
=
\frac{1}{N}
\sum_{m=0}^{N-1}\sum_{n=0}^{N-1}
g[m-n]\,
e^{-j\frac{2\pi}{N}km}\,
e^{j\frac{2\pi}{N}\ell n}.
\end{equation}
With a change of variables $r=m-n$, we have
\begin{equation}
\Gamma_{k,\ell}
=
\frac{1}{N}
\sum_{r=-(N-1)}^{N-1}
g[r]\,
e^{-j\frac{2\pi}{N}kr}
\sum_{n=n_0(r)}^{n_1(r)}
e^{-j\frac{2\pi}{N}(k-\ell)n},
\label{eq:Akl_exact}
\end{equation}
where $n_0(r)=\max(0,-r)$ and $n_1(r)=\min(N-1,N-1-r)$ For $k\neq \ell$, the inner summation is a finite geometric series. Hence, using the identity $\sum_{n=a}^{b} q^n
=
q^a\frac{1-q^{b-a+1}}{1-q},$
we have


\begin{align}
\Gamma_{k,\ell}
&= \frac{1}{N}
\sum_{r=-(N-1)}^{N-1}
g[r] e^{-j\frac{2\pi}{N}kr}
e^{-j\frac{2\pi}{N}(k-\ell)n_0(r)}
\notag\\
&\quad \times
\frac{
1-e^{-j\frac{2\pi}{N}(k-\ell)(N-|r|)}
}{
1-e^{-j\frac{2\pi}{N}(k-\ell)}
},
\quad k\neq \ell ,
\label{eq:offdiag_exact}
\end{align}
 which represent the residual inter-eigenchannel interference after DFT-domain diagonalization. 

Under mismatched decoding, we do not employ eigenchannels with small gains as they have weak    SNR gains. Therefore, we still modulate the frequency-domain symbols on the $N_r$ strong eigenchannels as discussed in Section~\ref{sec:conscapacityderivation} and we still use the index set as defined in \eqref{eqn:symmodulateindex}. 



\subsection{Equivalent Interference Model}

Under the mismatched model, the input-output relation is
\begin{equation}
\tilde{\bm{y}} = \bm{D}^\dagger\bm{y}=\sqrt{E_s}\bm{D}^\dagger\bm{G}\bm{D}\bm{s}+\bm{D}^\dagger\bm{\eta}=\sqrt{E_s}~\mathbf{\Gamma}\mathbf{s} + \bm{\omega}, \label{eqn:truemismatchmodel}
\end{equation}
 with the $d$th eigenchannel  written as 
\begin{align}
\tilde{y}_d &= \sqrt{E_s}~\Gamma_{d,d}s_d + \sqrt{E_s}\sum_{\substack{\ell \neq d\\ \ell \in \mathcal{D}}} \Gamma_{d,\ell}s_\ell + \omega_d \notag \\
&=\sqrt{E_s}~\Gamma_{d,d}s_d + \xi_d + \omega_d\label{eqn:pereigenchnlinoutputwithinterference}
\end{align}
Here the noise term $\bm{\omega}$ has the covariance $\mathbb{E}[\bm{\omega}\bm{\omega}^\dagger]=\sigma_0^2\bm{D}^\dagger\bm{G}\bm{D}=\sigma_0^2\bm{\Gamma}$. Also note that $\mathbb{E}[|\omega_d|^2]=\sigma_0^2\Gamma_{d,d}$.
Similar to \eqref{eqn:powconst}, the energy allocated to each eigenchannel $E_s$ under the mismatched model should satisfy the power constraint 
\begin{align}
\frac{E_s}{N\delta T}\text{tr}\left(\bm{G}\bm{D}\mathbb{E}[\bm{s}\bm{s}^\dagger]\bm{D}^\dagger\right) 
    &=\frac{E_s}{N\delta T}\text{tr}\left(\bm{\Gamma}\right)=P_{TX},
\end{align}
and we have 
\begin{align}
    E_s=\frac{NP_{TX}\delta T}{\sum_{d=0}^{N_r-1}\Gamma_{d,d}}.
\end{align}
Since the symbols $\{s_\ell\}$ are independent, zero mean, and have unit energy, the interference variance of the $d$th eigenchannel is
\begin{equation}
\sigma_{I,d}^2
\triangleq
\mathbb{E}\left[
\left|\sqrt{E_s}
\sum_{\substack{\ell \neq d\\ \ell \in \mathcal{D}}} \Gamma_{d,\ell}s_\ell
\right|^2
\right]=E_s\sum_{\substack{\ell \neq d\\ \ell \in \mathcal{D}}}|\Gamma_{d,\ell}|^2 .
\end{equation}

Next, we model the residual interference $\xi_d$ as complex Gaussian since Gaussian noise is the worse case for a given variance \cite{modelintasgausnoise}. 
 Then, the eigenchannel model becomes
\begin{equation}
\tilde{y}_d \approx \sqrt{E_s}~\Gamma_{d,d} s_d + \tilde{\omega}_d, \label{eqn:mismatchmodel}
\end{equation}
where
\begin{equation}
    \tilde \omega_d \sim \mathcal{CN}(0,\nu_d^2),
    \qquad
    \nu_d^2
    \triangleq
    \sigma_0^2 \Gamma_{d,d}
    +
    E_s\sum_{\substack{\ell \neq d\\ \ell \in \mathcal{D}}}|\Gamma_{d,\ell}|^2 .
    \label{eqn:misnoisevar}
\end{equation}
As a result, the effective $\mathsf{SNR_{tx}}$ of the $d$th eigenchannel is
\begin{equation}
SNR_d
=
\frac{|\Gamma_{d,d}|^2 E_s}{\nu_d^2}.
\label{eq:effective_snr_general}
\end{equation}

\subsection{Mismatched AIR derivation}
For finite-alphabet inputs with uniform distribution, the standard
mismatched AIR is obtained by averaging the mismatched information
density under the true observation law \cite{mismatchorig,Gokhanmismatch}. Specifically, we let $q_d(\tilde y_d|c)$ denote the assumed likelihood assigned to symbol $c\in\mathcal S$ on the $d$th eigenchannel. Then the corresponding mismatched information density for the $d$th eigenchannel where $\bar{c}$ is the actual symbol being sent is
\begin{equation}
    \iota_d(s_d,\tilde y_d)
    =
    \log_2 M
    -
    \log_2
    \left(
    \frac{\sum\limits_{c\in\mathcal S} q_d(\tilde y_d|s_d=c)}
    {q_d(\tilde y_d|s_d=\bar{c})}
    \right)
    \label{eq:mm_branch_info_density}
\end{equation}
in bits/symbol. The quantity \(\imath_d({s}_d,\tilde{y}_d)\) measures how strongly the
mismatched receiver output \(\tilde{y}_d\) supports the actually
transmitted symbol relative to all possible constellation hypotheses.
The denominator is the assumed likelihood of the true transmitted
symbol, while the numerator is the total assumed likelihood over all
candidate symbols. If the likelihoods assigned to all incorrect symbols
are close to zero and \(q_d(\tilde{y}_d|\bar{c})\) dominates the sum,
then
\[
\sum_{c\in\mathcal{S}}q_d(\tilde{y}_d|c)
\approx
q_d(\tilde{y}_d|\bar{c}),
\]
and the information density approaches \(\log_2 M\).  This corresponds to
a highly reliable observation, where the receiver can correctly identify the
transmitted constellation point. In contrast, if the likelihoods of
several incorrect symbols are comparable to the likelihood of the true
symbol, then the ratio inside the logarithm becomes large and the
information density decreases. In the extreme case, where all constellation
points are almost equally likely under the mismatched metric, the ratio is
approximately \(M\), and the information density approaches zero.
The mismatched AIR in bits per transmitted time-domain symbol is given by
\begin{equation}
    I_{\rm mm}(\delta)
    =
    \mathbb{E}_{\bm{s}}
    \left[
    \frac{1}{N_r}
    \sum_{d=0}^{N_r-1}
    \iota_d(s_d,\tilde y_d)
    \right],
    \label{eq:mm_air_symbol}
\end{equation}
where the expectation averages over all transmitted symbol vectors and noise realizations generated by the actual FTN channel model in \eqref{eqn:truemismatchmodel}. 
We compute the likelihood $q_d(\tilde y_d|s_d=c)$ inside the expectation according to the approximated model \eqref{eqn:mismatchmodel} as 
\begin{equation}
    q_d(\tilde y_d|s_d=c)
    =
    \frac{1}{\pi \nu_d^2}
    \exp\!\left(
    -\frac{|\tilde y_d-\sqrt{E_s}\,\Gamma_{d,d}c|^2}{\nu_d^2}
    \right),
    \quad a\in\mathcal S.
    \label{eq:mm_branch_metric}
\end{equation} 
This mismatch between the true observation law and the assumed decoding rule quantifies the practical degradation caused by imperfect diagonalization.
\begin{theorem}
    The mismatched AIR for DFT-precoded FTN without CP/CS in bits/s/Hz is
\begin{align}
    I_{\rm mm}'(\delta)
    &=
    \frac{1}{\delta TW}\,I_{\rm mm}(\delta),
    \label{eq:mm_air_bpshz}
\end{align}
where $I_{\rm mm}$ is given by \eqref{eq:mm_air_symbol}. 
\end{theorem}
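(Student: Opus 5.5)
The plan is to prove the statement in two parts. First, $I_{\rm mm}(\delta)$ in \eqref{eq:mm_air_symbol} is an achievable rate, in bits per active eigenchannel use, for the receiver that decodes with the metric \eqref{eq:mm_branch_metric}. Second, the normalization converts this into bits/s/Hz exactly as in the matched case of Theorem~\ref{thm:conscap_dft_ftn}.

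\textbf{Achievability.} For the first part I would invoke the generalized mutual information (GMI) framework of \cite{airmis,mismatchorig}. Take i.i.d.\ uniform codebooks over $\mathcal S$ on each active eigenchannel and a decoder that maximizes $\prod_d q_d(\tilde y_d|c_d)$. Then any rate below
\begin{equation*}
\sup_{s\ge 0}\mathbb{E}\!\left[\log_2\frac{q_d(\tilde y_d|s_d)^s}{\frac{1}{M}\sum_{c\in\mathcal S}q_d(\tilde y_d|c)^s}\right]
\end{equation*}
is achievable, with the expectation taken under the true law \eqref{eqn:truemismatchmodel}.

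Setting $s=1$ gives a valid lower bound on the GMI. With $s=1$ the argument of the expectation is exactly $\iota_d$ in \eqref{eq:mm_branch_info_density}, since $\log_2 M-\log_2\bigl(\sum_c q_d/q_d(\cdot|\bar c)\bigr)$ is the same quantity. Hence each $\mathbb{E}[\iota_d]$ is achievable on eigenchannel $d$.

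\textbf{Joint coding and the main obstacle.} The eigenchannels are coupled through $\xi_d$ and through the correlated noise $\bm\omega$, whose covariance is $\sigma_0^2\bm\Gamma$ and is not diagonal. The step I expect to be hardest is justifying the sum over eigenchannels despite this coupling.

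My approach is to treat the block $\tilde{\bm y}$ as one vector channel, reused across many FTN blocks that are i.i.d.\ given independent codeword symbols. The product metric $q(\tilde{\bm y}|\bm c)=\prod_{d\in\mathcal D}q_d(\tilde y_d|c_d)$ is a legitimate, though mismatched, vector decoding metric. Because the input distribution factorizes, the GMI at $s=1$ for this vector metric is $\mathbb{E}[\sum_d\iota_d]$, and the true joint law enters only through the expectation.

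Since $\bm s$ and the noise are i.i.d.\ across blocks, the standard random-coding argument of \cite{airmis} applies, with the block-wise channel being memoryless. This yields the achievable rate $\mathbb{E}[\sum_{d}\iota_d]$ per block.

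\textbf{Normalization.} Dividing by $N_r$ gives $I_{\rm mm}(\delta)$ per active eigenchannel. Following the same normalization that produced \eqref{eqn:capbpsphz} from \eqref{eqn:persymcap}, we first convert to bits per transmitted time-domain symbol, $\frac{1}{N}\sum_d\mathbb{E}[\iota_d]=\frac{N_r}{N}I_{\rm mm}$. The block of $N$ symbols occupies duration $N\delta T$ and bandwidth $W$, so the rate in bits/s/Hz is
\begin{equation*}
\frac{1}{N\delta TW}\sum_{d}\mathbb{E}[\iota_d]=\frac{1}{\delta TW}\cdot\frac{N_r}{N}\,I_{\rm mm}.
\end{equation*}

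This expression must match \eqref{eq:mm_air_bpshz}, which uses the paper's convention from Theorem~\ref{thm:conscap_dft_ftn}. Under that convention \eqref{eqn:capbpsphz} is applied to the per-active-eigenchannel average, and the factor $N_r/N$ is absorbed using \eqref{eqn:NandNred}. With $I_{\rm mm}$ identified as the per-symbol quantity in \eqref{eq:mm_air_symbol}, we conclude $I'_{\rm mm}(\delta)=I_{\rm mm}(\delta)/(\delta TW)$.

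Because no CP/CS is used, there is no $N/(N+2L)$ overhead factor of the kind discussed in Remark~\ref{rem:cpcsrem}.
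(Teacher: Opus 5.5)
\textbf{The gap is in your final normalization step.} You correctly arrive at $\frac{1}{N\delta TW}\sum_{d}\mathbb{E}[\iota_d]$ bits/s/Hz. If $I_{\rm mm}$ is taken literally from \eqref{eq:mm_air_symbol}, which is normalized by $N_r$, this equals $\frac{N_r}{N}\cdot\frac{I_{\rm mm}}{\delta TW}$. Absorbing $N_r/N$ through \eqref{eqn:NandNred} means using $N_r/N\approx\delta TW$. That cancels the prefactor and gives $I'_{\rm mm}\approx I_{\rm mm}$, which is the corollary \eqref{eq:mm_air_bpshz_approx}, not the theorem. For $\delta TW<1$ the two cannot both hold for the same $I_{\rm mm}$.

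A sanity check shows which reading is right. The scheme has $N_r\approx N\delta TW$ eigenchannels, each carrying at most $\log_2 M$ bits, over $N\delta T$ seconds and bandwidth $W$. Its rate therefore cannot exceed about $\log_2 M$ bits/s/Hz. With the $1/N_r$ normalization, however, $I_{\rm mm}/(\delta TW)$ approaches $\log_2 M/(\delta TW)$ once the AIR saturates, e.g., $8$ bits/s/Hz for 16QAM at $\delta TW=1/2$.

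Your reading of Theorem~\ref{thm:conscap_dft_ftn} is also off. \eqref{eqn:capbpsphz} is applied to the per-transmitted-symbol rate \eqref{eqn:persymcap}, which is normalized by $N$. The per-active-eigenchannel form \eqref{eqn:aveovereigenchnl} appears only after $N_r/N\approx\delta TW$ has consumed the factor $1/(\delta TW)$.

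The missing step is to state explicitly that $I_{\rm mm}$ must be the rate per transmitted time-domain symbol, $I_{\rm mm}(\delta)=\frac{1}{N}\mathbb{E}\bigl[\sum_{d\in\mathcal{D}}\iota_d(s_d,\tilde y_d)\bigr]$. This is what the sentence introducing \eqref{eq:mm_air_symbol} says, and what \eqref{eqn:persymcap} does in the matched case; the $1/N_r$ in that display contradicts its own label and the corollary. With this definition the theorem is exact, since an $N$-symbol block occupies $N\delta T$ seconds and bandwidth $W$. The corollary then follows from \eqref{eqn:NandNred}.

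Apart from this, your argument is sound and more detailed than the paper, which gives no separate proof. The paper treats \eqref{eq:mm_air_symbol} as the standard mismatched rate of \cite{mismatchorig,Gokhanmismatch} and obtains \eqref{eq:mm_air_bpshz} by the same time--bandwidth conversion as \eqref{eqn:capbpsphz}. Your GMI evaluation at $s=1$, for the product metric over the super-symbol $\bm{s}_{\mathcal{D}}$ with i.i.d.\ uniform inputs, correctly gives $\sum_{d}\mathbb{E}[\iota_d]$ under the true correlated law. That properly justifies summing over the coupled eigenchannels.

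You should, however, state block-wise memorylessness as an assumption. Without CP/CS, the $L$-sample pulse tails of adjacent blocks overlap, and the matched-filter noise is correlated across block boundaries. Independence across blocks therefore holds only under the paper's single-block idealization. Enforcing it with guard intervals would reintroduce an overhead factor like the one in Remark~\ref{rem:cpcsrem}, contrary to your closing sentence.
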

\begin{corollary}
    When $\delta TW<1$, and $N_{ r}/N \approx \delta TW$, then \eqref{eq:mm_air_bpshz} can be approximated as
\begin{equation}
    I_{\rm mm}'(\delta)
    \approx
    \mathbb{E}_{\bm{s}}
    \left[
    \frac{1}{N_r}
    \sum_{d=0}^{N_r-1}
    \imath_d(s_d,\tilde y_d)
    \right].
    \label{eq:mm_air_bpshz_approx}
\end{equation}
\end{corollary}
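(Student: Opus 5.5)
The plan is to mirror the step that turned \eqref{eqn:capbpsphz} into Theorem~\ref{thm:conscap_dft_ftn}: substitute the eigenvalue-counting approximation \eqref{eqn:NandNred} into the bits/s/Hz conversion \eqref{eq:mm_air_bpshz}. This only works once the per-symbol normalization is fixed. As in \eqref{eqn:persymcap}, a rate ``per transmitted time-domain symbol'' spreads the information carried by the $N_r$ active eigenchannels over all $N$ transmitted FTN samples. I will therefore read the per-symbol mismatched AIR as
\begin{equation*}
I_{\rm mm}(\delta)=\mathbb{E}_{\bm{s}}\Big[\frac{1}{N}\sum_{d=0}^{N_r-1}\imath_d(s_d,\tilde y_d)\Big].
\end{equation*}
This is the quantity that the factor $1/(\delta TW)$ in \eqref{eq:mm_air_bpshz} converts into bits/s/Hz, exactly as $C(\delta)$ is converted into $C'(\delta)$ in \eqref{eqn:capbpsphz}.

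First, I check that nothing in the mismatched setting changes the counting of dimensions. Symbols are still loaded only on the index set $\mathcal{D}$ in \eqref{eqn:symmodulateindex}, whose cardinality is $N_r$. The information densities $\imath_d$ in \eqref{eq:mm_branch_info_density} are bounded per eigenchannel, with values in $(-\infty,\log_2 M]$ and finite expectation under the true law \eqref{eqn:truemismatchmodel}. Linearity of expectation then lets the sum over $d$ be pulled outside or kept inside the expectation without difficulty. The residual interference changes only the values of the individual terms, through $\nu_d^2$ and $\Gamma_{d,d}$; it does not change the number of terms or the normalization.

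Second, I substitute the hypothesis $N_r/N\approx\delta TW$, valid when $\delta TW<1$, into \eqref{eq:mm_air_bpshz}:
\begin{equation*}
I'_{\rm mm}(\delta)=\frac{1}{\delta TW}\cdot\frac{1}{N}\,\mathbb{E}_{\bm{s}}\Big[\sum_{d=0}^{N_r-1}\imath_d\Big]\approx\frac{N}{N_r}\cdot\frac{1}{N}\,\mathbb{E}_{\bm{s}}\Big[\sum_{d=0}^{N_r-1}\imath_d\Big].
\end{equation*}
This is \eqref{eq:mm_air_bpshz_approx}. As in Remark~\ref{rem:cpcsrem}, no CP/CS overhead appears here, so no extra factor is needed. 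The only error is that of \eqref{eqn:NandNred}, which vanishes as $N\to\infty$ at fixed $\delta$.

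The main obstacle is not analytical but one of bookkeeping. One must make sure the normalization in \eqref{eq:mm_air_symbol} is interpreted as per time-domain symbol, i.e., over $N$, rather than per active eigenchannel. Otherwise the factor $1/(\delta TW)$ would be applied twice. I would state this convention explicitly at the start, pointing to \eqref{eqn:persymcap}, and then the corollary follows in one line.
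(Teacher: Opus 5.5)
Your proposal is correct and matches the paper's implicit argument: the corollary follows from substituting $N_r/N\approx\delta TW$ into the $1/(\delta TW)$ conversion \eqref{eq:mm_air_bpshz}, exactly as Theorem~\ref{thm:conscap_dft_ftn} is obtained from \eqref{eqn:capbpsphz}. You were right to read \eqref{eq:mm_air_symbol} with normalization $1/N$ (per time-domain symbol, as in \eqref{eqn:persymcap}) instead of the printed $1/N_r$. With $1/N_r$ the stated approximation would require $\delta TW\approx 1$, which contradicts the hypothesis $\delta TW<1$.
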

\begin{remark}
    In \eqref{eq:mm_air_bpshz_approx}, the expectation is taken under the true correlated FTN law. Thus, a closed-form evaluation is generally not available. The mismatched AIR is computed numerically by Monte Carlo simulation.
\end{remark}

\section{Adaptive Bit Loading}
\label{sec:adaptivebitload}



The DFT-precoded FTN system with CP and CS decomposes the channel into a set of parallel eigenchannels, where the $d$th eigenchannel gain is $\lambda_d$. Each eigenchannel is generally not equally strong.  This is also visible in Fig.~\ref{fig:eigenval} for the matched case.
A similar argument of reliability also appears in the mismatched case. 
Eigenchannels with stronger gains and weaker interference are more reliable than others.  

As shown in Corollaries~\ref{thm:asympftntx} and~\ref{thm:asymprxsnrconscap},
for a fixed finite-alphabet constellation of size $M$, the maximum amount of achievable information rate for each eigenchannel can carry is limited by  $\log_2M$.  Unless a larger constellation size is employed, the rate can not increase further. Hence, when all eigenchannels are forced to use the same modulation order, the stronger eigenchannels cannot fully exploit their favorable channel conditions, while the weaker eigenchannels may not be able to reliably support that constellation. 

This observation motivates adaptive bit loading across the FTN eigenchannels. Instead of assigning a universal constellation to all eigenchannels, we enable the modulation order to vary with the eigenchannel quality. This strategy is known to provide a larger overall spectral efficiency than fixed modulation in Nyquist systems \cite{goldsmith}. 

 
In this work, we focus on square $M$-QAM constellations due to their practicality.  For a square $M$-QAM constellation, the error probability $P_e$ can be approximately calculated as 
\begin{align}
    P_e\approx1-\left[1-2\left(1-\frac{1}{\sqrt{M}}Q\left(\sqrt{\frac{3\gamma}{M-1}}\right)\right)\right]^2,\label{eqn:mqampe}
\end{align}
where $Q(\cdot)$ is the Q function and $\gamma$ is the $\mathsf{SNR_{tx}}$.

\begin{figure*}
\begin{equation}
P_{e,d}=
\begin{cases}
1-\left[1-2\left(1-\frac{1}{\sqrt{M}}Q\left(\sqrt{\frac{3\lambda_d NP_{TX}\delta T}
    {(M-1)\sigma_0^2\sum_{d'=0}^{N-1}\lambda_{d'}}}\right)\right)\right]^2, & \delta TW \geq 1,\\
1-\left[1-2\left(1-\frac{1}{\sqrt{M}}Q\left(\sqrt{\frac{3\lambda_d N_rP_{TX}T}
    {(M-1)\sigma_0^2(1+\beta)\sum_{d'=0}^{N-1}\lambda_{d'}}}\right)\right)\right]^2, & \delta TW<1.
\end{cases} \label{eqn:errprob}
\end{equation}
\end{figure*}

For the matched case, inserting \eqref{eqn:snrichannelabthres} and \eqref{eqn:snrichannelblthres} into \eqref{eqn:mqampe}, we obtain the error probability $P_{e,d}$ for the $d$th eigenchannel. It is shown in \eqref{eqn:errprob} at the top of the next page. 
The modulation order for the $d$th eigenchannel is then chosen according to a target symbol error rate threshold $P_{e}^{\mathrm{th}}$. Specifically, for each eigenchannel, we select the largest modulation order $M_d$ from the set $\mathcal{M}$,  $\mathcal{M}=\{4,16,64,256\}$, such that 
\begin{equation}
M_d = \max \left\{ M_d\in \mathcal{M} : P_{e,d} \le P_e^{\mathrm{th}} \right\}.
\label{eq:abi_max_rule}
\end{equation}
If there is no modulation order that can be larger than the minimum allowable constellation that satisfies the target reliability requirement, then the eigenchannel is assigned the lowest-order constellation.

Once the modulation order $M_d$ is determined for each eigenchannel, the constrained capacity of the $d$th eigenchannel is evaluated using the same constrained-capacity formula as in \eqref{eqn:itheigenchnlcap}, but with the constellation chosen for that channel. Denoting the corresponding constrained capacity by $C_{a,d}(\delta)$, the overall achievable spectral efficiency $C_a(\delta)$ under adaptive bit loading is
\begin{equation}
C_a(\delta)=\frac{1}{N}\sum_{i=0}^{N_r-1} C_{a,d}(\delta),
\label{eq:abl_rate_symbol}
\end{equation}
in bits per transmitted symbol, and
\begin{equation}
C_a'(\delta)=\frac{1}{\delta TW}\, C_a(\delta)
\label{eq:abl_rate_bshz}
\end{equation}
in bits/s/Hz. When $\delta TW<1$ and $N_{{r}}/N \approx \delta TW$, this can also be written as
\begin{equation}
C_a'(\delta)\approx \frac{1}{N_r}\sum_{d=0}^{N_{r-1}} C_{a,d}(\delta).
\label{eq:abl_rate_avg}
\end{equation}

For the mismatched case, the eigenchannel quality is determined jointly by the diagonal gain and the residual inter-eigenchannel interference. Accordingly, adaptive bit loading assigns higher-order constellations to eigenchannels with larger effective SNRs as in \eqref{eq:effective_snr_general}, while weaker eigenchannels use lower-order modulation. 
The benefit of adaptive bit loading is that it removes the unnecessary ceiling imposed by fixed modulation.   As a result, the total achievable spectral efficiency can exceed that of any single fixed-modulation scheme, especially in high SNR.

\section{Simulation Results}




In this section, we present numerical results for DFT-precoded
FTN signaling with finite-alphabet inputs. We set the pulse shaping filter to be RRC with roll-off factor $\beta=0.25$. We first evaluate the constrained capacity under the
matched parallel eigenchannel model derived in Section~III. We then compute the mismatched AIR based on
the derivations in Section~IV. Finally, we investigate adaptive
bit loading across the eigenchannels according to the
eigenchannel quality for the matched and mismatched cases based on Section~V. 
 The results are presented in bits/s/Hz.  In the simulations, we use both fixed-$\mathsf{SNR_{tx}}$ and fixed-$\mathsf{SNR_{rx}}$ power profiles.  

 \begin{figure}
    \centering
    \includegraphics[width=0.8\linewidth]{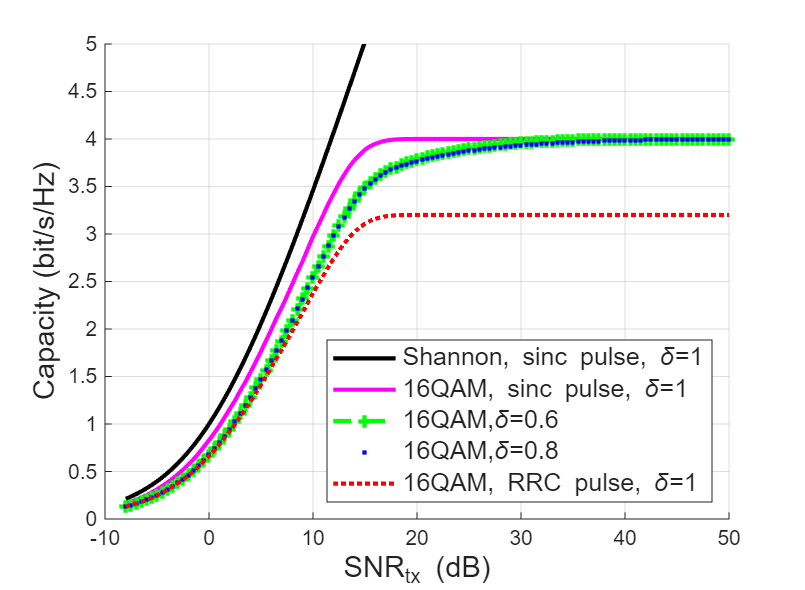}
    \caption{ Constrained capacity vs $\mathsf{SNR_{tx}}$ for different $\delta$. For curves not using sinc pulse, the RRC roll-off factor is $\beta=0.25$.}
    \label{fig:txcapvssnr}
\end{figure}

\begin{figure}
    \centering
    \includegraphics[width=0.8\linewidth]{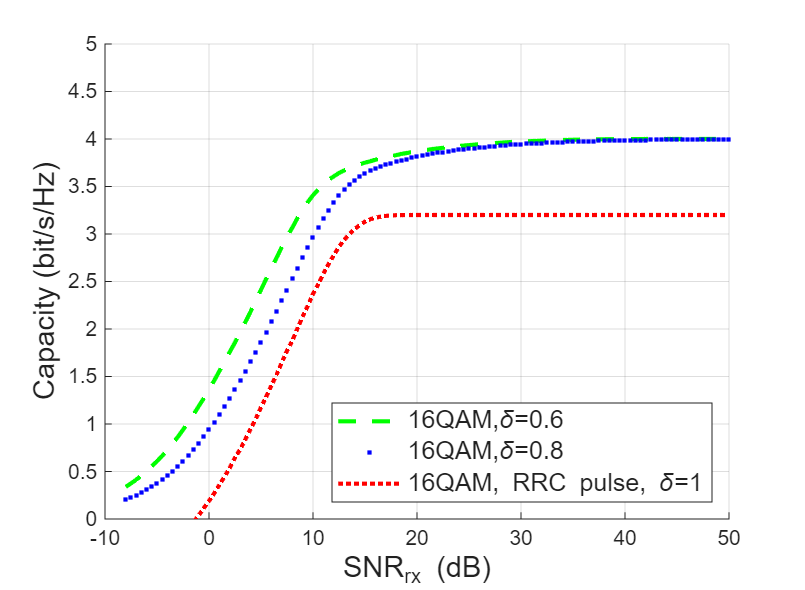}
    \caption{Constrained capacity vs $\mathsf{SNR_{rx}}$ for different $\delta$. For the curves, the RRC roll-off factor is $\beta=0.25$.}
    \label{fig:rxcapvssnr}
\end{figure}

Figs.~\ref{fig:txcapvssnr} and~\ref{fig:rxcapvssnr} demonstrate the constrained capacity of DFT-precoded FTN for fixed $\mathsf{SNR_{tx}}$ and $\mathsf{SNR_{rx}}$ respectively. We can see that FTN increases the spectral efficiency compared to Nyquist transmission when RRC pulse with $\beta=0.25$ is used for both fixed $\mathsf{SNR_{tx}}$ and $\mathsf{SNR_{rx}}$ cases. The curves for $\delta=0.6$ and $\delta=0.8$ overlap in Fig.~\ref{fig:txcapvssnr} since under fixed $\mathsf{SNR_{tx}}$, decreasing $\delta$ does not further improve the capacity as is discussed in Corollary~\ref{thm:asympftntx}. The 16QAM sinc pulse curve with $\delta=1$ is included only as a Nyquist benchmark, whereas the Shannon capacity curve serves as an upper bound. 
At high $\mathsf{SNR_{tx}}$, the FTN curves with
$\delta=0.6$ and $\delta=0.8$ approach the information rate upper bound 
$\log_2 M$ bits/s/Hz. 
This behavior follows from \eqref{eqn:aveovereigenchnl}, when $\delta TW<1$, the spectral efficiency is
the average of constrained capacity over the active eigenchannels.
At sufficiently high SNR, each active eigenchannel approaches
its finite-alphabet limit of $\log_2 16=4$ bits per symbol.
Consequently, the overall spectral efficiency approaches
$4$ bits/s/Hz, which is the same high-SNR limit as Nyquist 16QAM
with sinc pulse. Therefore, FTN can recover the spectral efficiency
loss caused by the RRC.
In Fig.~\ref{fig:rxcapvssnr}, under fixed $\mathsf{SNR_{rx}}$, because of the increasing physical transmission power, for smaller $\delta$, a better performance can be achieved. At high
$\mathsf{SNR_{rx}}$, both curves converge because the
constrained capacity is limited by the ceiling of
$\log_2 16$ bits/s/Hz.

\begin{figure}
    \centering
    \includegraphics[width=0.8\linewidth]{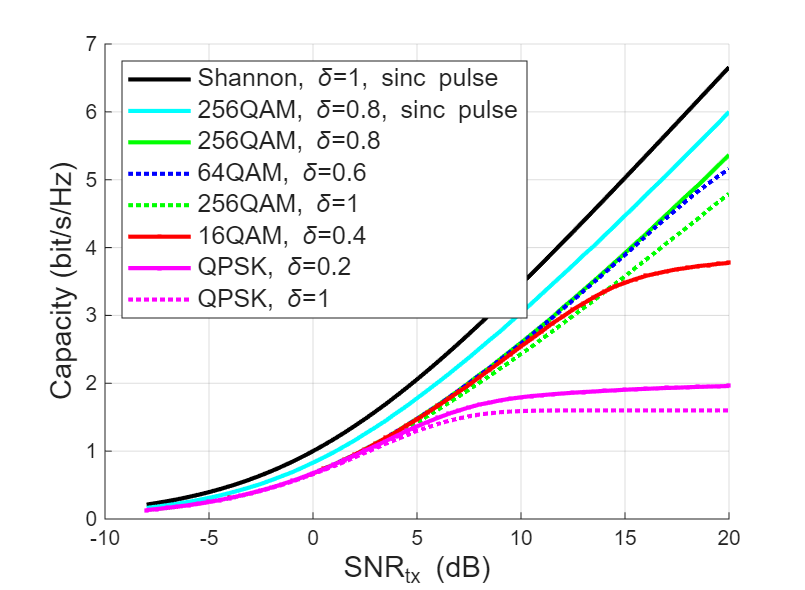}
    \caption{Constrained capacity vs $\mathsf{SNR_{tx}}$, where the spectral efficiency $\log_2M/\delta$ in bits/s is fixed for all curves except the curve for QPSK with $\delta=1$ and 256QAM with $\delta=1$. For curves not using sinc pulse, the RRC roll-off factor is $\beta=0.25$.}
    \label{fig:logmdivtautx}
\end{figure}

Fig.~\ref{fig:logmdivtautx} shows the constrained capacity under fixed $\mathsf{SNR_{tx}}$ for several $(M,\delta)$ pairs where the ratio $\frac{\log_2M}{\delta}$ is kept the same across different $(M,\delta)$ pairs. In the low and moderate SNR
regions, a smaller acceleration factor allows a lower order
constellation to achieve nearly the same constrained capacity as a higher-order constellation with a larger
$\delta$. Specifically, the curves for QPSK with $\delta=0.2$
and 16QAM with $\delta=0.4$ are nearly identical up to
approximately $5$ dB. Similarly, 16QAM with $\delta=0.4$
performs comparably to 64QAM with $\delta=0.6$ up to
approximately $10$ dB, while 64QAM with $\delta=0.6$ and
256QAM with $\delta=0.8$ exhibit similar performance up to
approximately $17.5$ dB. These results show that FTN with higher symbol rate can allow a reduction in
constellation order. This is
practically attractive because lower-order constellations require
lower detection complexity and can provide a more favorable
instantaneous-to-average power ratio.

\begin{figure}
    \centering
    \includegraphics[width=0.8\linewidth]{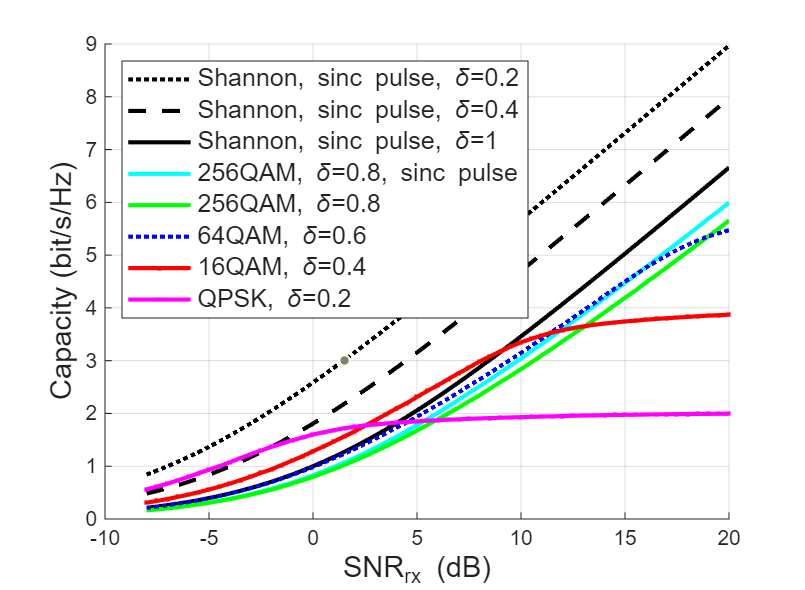}
    \caption{Constrained capacity vs $\mathsf{SNR_{rx}}$, where the spectral efficiency $\log_2M/\delta$ in bits/s is fixed for all curves. For curves not using sinc pulse, the roll-off factor is $\beta=0.25$. The Shannon capacity curves for $\delta=0.2, 0.4$ are plotted as upper bounds, where the corresponding $\mathsf{SNR_{rx}}$ is computed as $P_{TX}/\delta$.  }
    \label{fig:logmdivtaurx}
\end{figure}

Fig.~\ref{fig:logmdivtaurx} shows the results when $\mathsf{SNR_{rx}}$ is fixed. In the low and moderate SNR scenarios, smaller constellation size combined with smaller acceleration factor outperforms larger constellations with larger $\delta$ value pairs. When finite alphabet is used, a larger constellation does not automatically guarantee a larger capacity, because the receiver must still reliably distinguish among more points.  Consequently, more aggressive FTN packing with a smaller alphabet becomes more preferable in the low-$\mathsf{SNR_{rx}}$ region. As $\mathsf{SNR_{rx}}$ continues to increase, the larger constellations gradually recover their advantage, because the symbol decisions become sufficiently reliable and the achievable rate approaches the alphabet entropy ceiling. 

\begin{figure}
    \centering
    \includegraphics[width=0.8\linewidth]{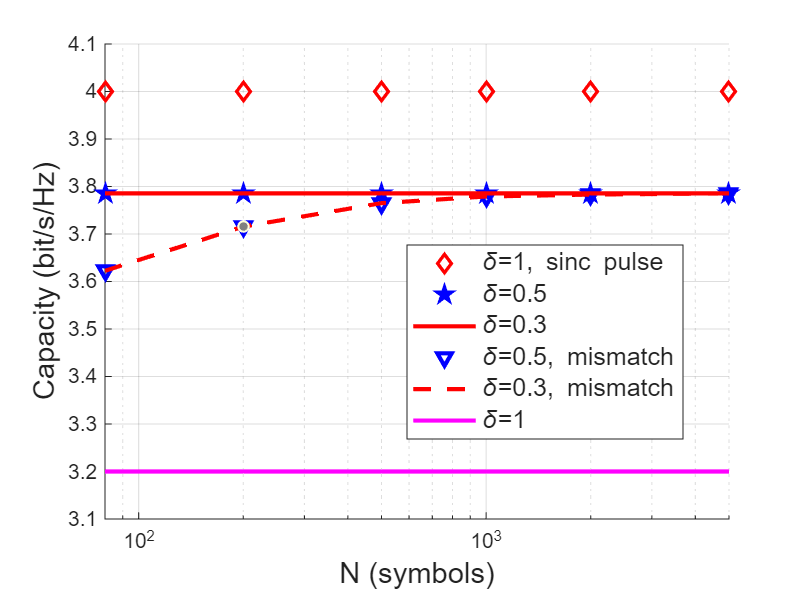}
    \caption{Constrained capacity and mismatched AIR vs. blocklength $N$ for different acceleration factors $\delta$ for 16QAM, where $\mathsf{SNR_{tx}}=$20 dB. For curves not using sinc pulse, the RRC roll-off factor is $\beta=0.25$.}
    \label{fig:mistx}
\end{figure}

\begin{figure}
    \centering
    \includegraphics[width=0.8\linewidth]{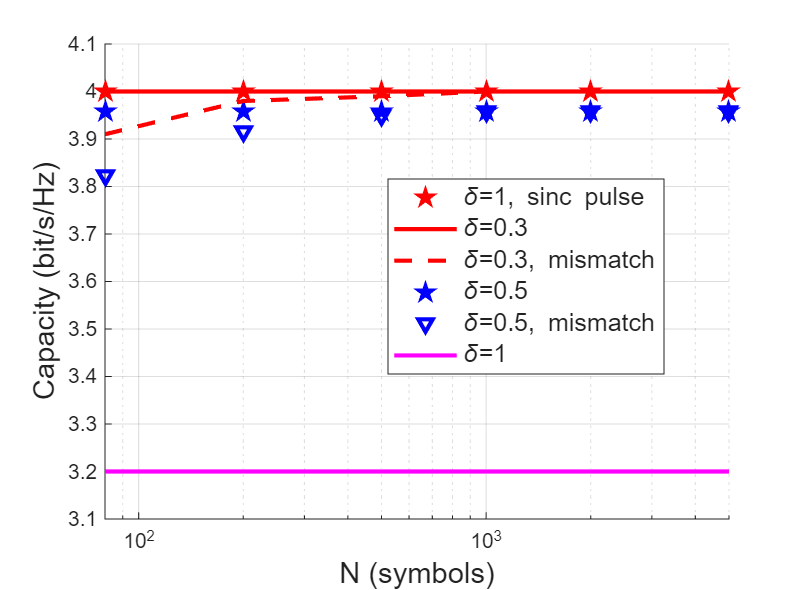}
    \caption{Constrained capacity and mismatched AIR vs. blocklength $N$ for different acceleration factors $\delta$ for 16QAM, where $\mathsf{SNR_{rx}}=$20 dB. For curves not using sinc pulse, the RRC roll-off factor is $\beta=0.25$.}
    \label{fig:misrx}
\end{figure}

We now turn to the performance degradation caused by channel mismatch. The mismatched AIR is evaluated according to \eqref{eq:mm_branch_info_density}--\eqref{eq:mm_branch_metric}. Specifically, the likelihood term $q_d(\tilde{y}_d|s_d)$ inside the expectation is computed using \eqref{eq:mm_branch_metric}, while $\tilde{y}_d$ is generated from the correlated channel model in \eqref{eqn:truemismatchmodel}. The expectation is then approximated by Monte Carlo simulation over the generated received vector $\bm{y}$, where the noise term $\bm{\omega}$ is correlated.
In Fig.~\ref{fig:mistx}, we compare, for fixed $\mathsf{SNR_{tx}}$, the constrained capacity without mismatch derived in \eqref{eqn:aveovereigenchnl} in Section~\ref{sec:conscapacityderivation} and the mismatched AIR in \eqref{eq:mm_air_bpshz_approx} in Section~\ref{sec:diffpowallo} for different block lengths $N$. 
For finite
$N$, imperfect DFT diagonalization leaves residual inter-eigenchannel
interference, causing the mismatched AIR to fall below the matched
constrained capacity. As $N$ increases, the DFT asymptotically
diagonalizes the Toeplitz channel matrix \cite{gray}, so the interference
weakens and the gap narrows for every $\delta$.
{It is also observed that the curves for \(\delta=0.5\) and \(\delta=0.3\) overlap in Fig.~\ref{fig:mistx} for fixed $\mathsf{SNR_{tx}}$. Both acceleration factors satisfy \(\delta(1+\beta)<1\) with $\beta=0.25$. In this regime, only the active eigenchannels are used for transmission, with \(N_r/N\approx \delta(1+\beta)\), and the spectral efficiency in bit/s/Hz is determined by the average constrained capacity over the active eigenchannels. Therefore, further reducing \(\delta\) from \(0.5\) to \(0.3\) does not change the constrained capacity under the fixed \(\mathsf{SNR}_{ tx}\) setting. The corresponding mismatched AIR curves also nearly coincide, since the finite-block residual interference follows the same active-eigenchannel structure and vanishes as \(N\) becomes large.} 
Fig.~\ref{fig:misrx} presents the corresponding results for fixed $\mathsf{SNR_{rx}}$. As $N$ increases, the residual
inter-eigenchannel interference vanishes, and the mismatched AIR
approaches the matched constrained capacity for each $\delta$.
This convergence, however, does not necessarily imply convergence
to the finite-alphabet upper bound. Since $\mathsf{SNR_{ rx}}$ is
fixed, the smaller acceleration
factor $\delta=0.3$ provides higher effective SNRs on the active
eigenchannels, allowing the 16QAM constrained capacity to approach
its upper bound of 4 bits/s/Hz. For $\delta=0.5$, the
power increase is smaller, and some weaker eigenchannels have not
yet reached their finite-alphabet saturation level at
$\mathsf{SNR_{rx}}=20$~dB. Hence, its asymptotic mismatched AIR remains below the
upper bound.

\begin{figure}
    \centering
    \includegraphics[width=0.8\linewidth]{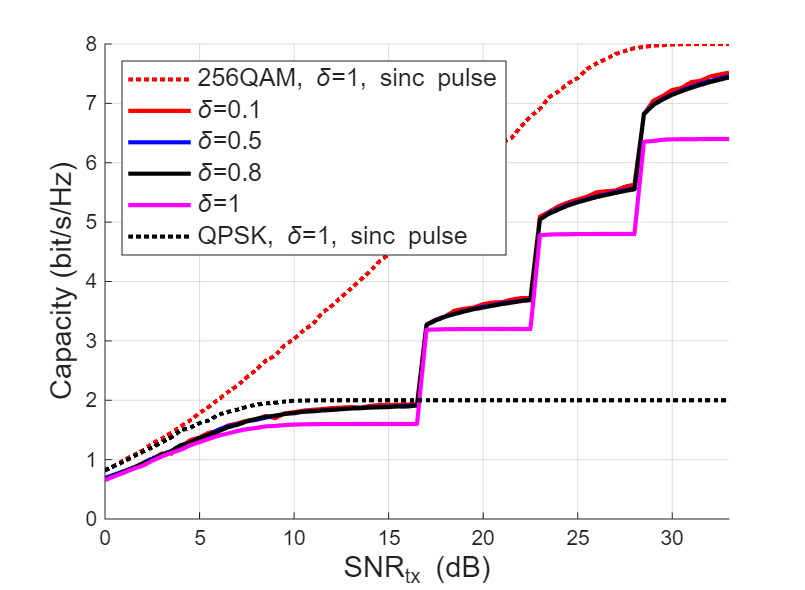}
    \caption{Constrained capacity with adaptive bit loading vs $\mathsf{SNR_{tx}}$, where RRC pulse with $\beta=0.25$ is applied. }
    \label{fig:adapsnrtx}
\end{figure}

We now look into the simulation results of the adaptive bit loading scheme in Section~\ref{sec:adaptivebitload} for both matched and mismatched cases.  
We show in Figs.~\ref{fig:adapsnrtx} and~\ref{fig:adapsnrrx} the constrained capacity under fixed $\mathsf{SNR_{tx}}$ and $\mathsf{SNR_{rx}}$ with adaptive bit loading across the FTN eigenchannels under a target error-probability threshold of $10^{-3}$ with blocklength $N=200$. We choose the set of modulation levels as $\mathcal{M}=\{4,16,64,256\}$ for the matched case. For each eigenchannel, the corresponding $\mathsf{SNR_{tx}}$ is computed from its eigenvalue, and the assigned modulation order is selected as the highest one whose error probability remains below $10^{-3}$. Thus, bit loading is performed independently according to the channel quality of each eigenchannel rather than using a single modulation order for the whole system. Stronger eigenchannels can support higher-order constellations, while weaker eigenchannels use lower-order modulation or carry no information.

The staircase behavior in Fig.~\ref{fig:adapsnrtx} is caused by the discrete modulation selection. The total rate changes only when the $\mathsf{SNR_{tx}}$ of one or more eigenchannels becomes sufficient to satisfy the $10^{-3}$ error-probability requirement for the next modulation level. Between two transition points, the loading pattern gradually shifts to higher-order constellations as $\mathsf{SNR_{tx}}$ increases, and we can see a gradual increase in the constrained capacity.  At $\delta=0.8$, 
the folded spectrum is nonuniform, producing stronger
eigenchannels in the plateau region and weaker eigenchannels
near the roll-off region. Adaptive bit loading assigns higher order constellations to the
stronger eigenchannels and lower order constellations to the
weaker ones, resulting in a higher spectral efficiency than
that obtained with $\delta=1$. When $\delta TW<1$,
further acceleration does not introduce additional signaling
dimensions. Instead, the effective SNR
 over the remaining active eigenchannels remains
approximately unchanged under fixed $\mathsf{SNR_{ tx}}$.
Therefore, the curves for
$\delta=0.5$ and $\delta=0.1$ are nearly identical.

\begin{figure}
    \centering
    \includegraphics[width=0.8\linewidth]{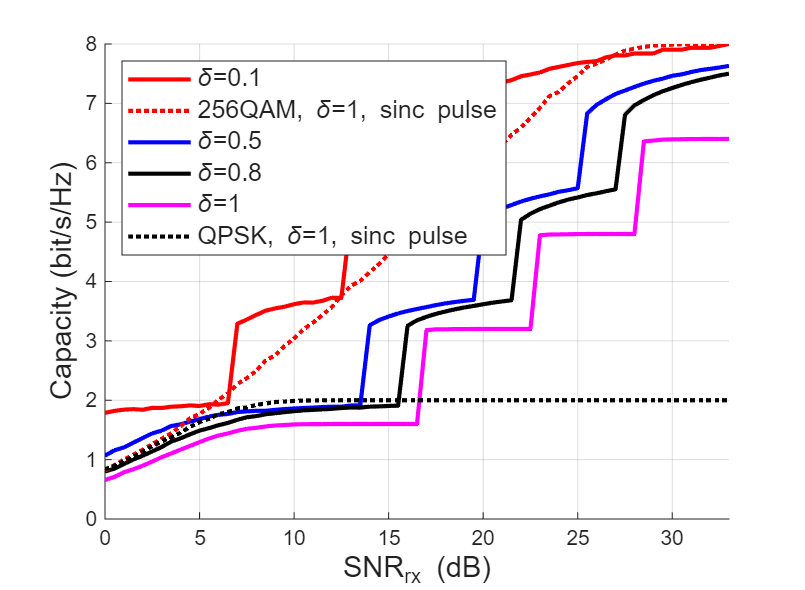}
    \caption{Constrained capacity with adaptive bit loading vs $\mathsf{SNR_{rx}}$, where RRC pulse with $\beta=0.25$ is applied.}
    \label{fig:adapsnrrx}
\end{figure}

In Fig.~\ref{fig:adapsnrrx}, the curves exhibit a staircase behavior similar to Fig.~\ref{fig:adapsnrtx}. However, under fixed $\mathsf{SNR_{rx}}$, the spectral
efficiency increases as $\delta$ decreases. This is because maintaining
the same $\mathsf{SNR_{ rx}}$ at a smaller $\delta$ requires a higher
physical transmit power, which increases the effective SNR of the
active eigenchannels. Consequently, higher-order constellations are
selected for smaller $\delta$, shifting the curve to the left.

\begin{figure}
    \centering
    \includegraphics[width=0.8\linewidth]{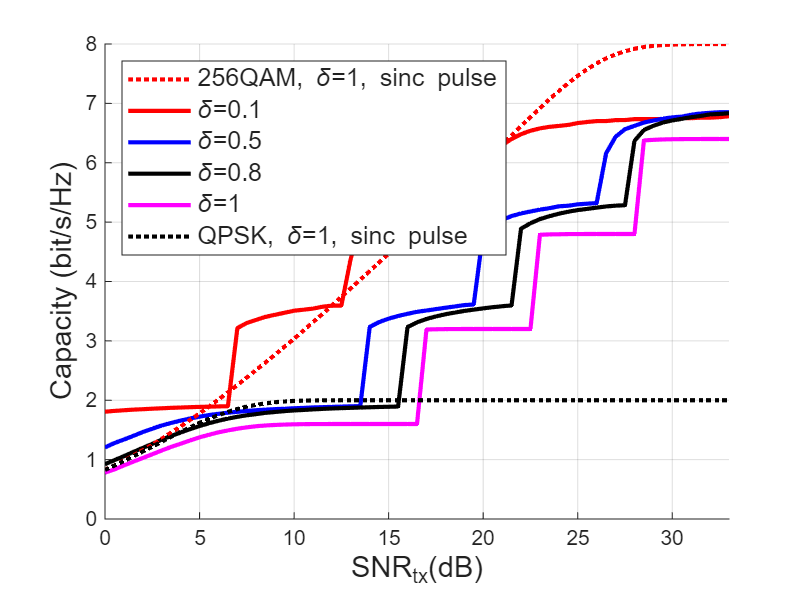}
    \caption{Mismatched AIR with adaptive bit loading vs $\mathsf{SNR_{rx}}$, where RRC pulse with $\beta=0.25$ is applied.}
    \label{fig:adaptivemismatchrx}
\end{figure}

In Fig.~\ref{fig:adaptivemismatchrx}, we show the results for mismatched AIR with adaptive bit loading under fixed $\mathsf{SNR_{rx}}$ under a target error-probability threshold of $10^{-3}$ with the same blocklength $N=200$. We choose the set of modulation levels as $\mathcal{M}=\{4,16,64,256\}$. In contrast to the matched case, the residual interference from other eigenchannels reduces the reliability of each eigenchannel and therefore lowers the achievable rate. Nevertheless, adaptive bit loading remains beneficial, since stronger eigenchannels can still support higher-order constellations.

\begin{figure}
    \centering
    \includegraphics[width=0.8\linewidth]{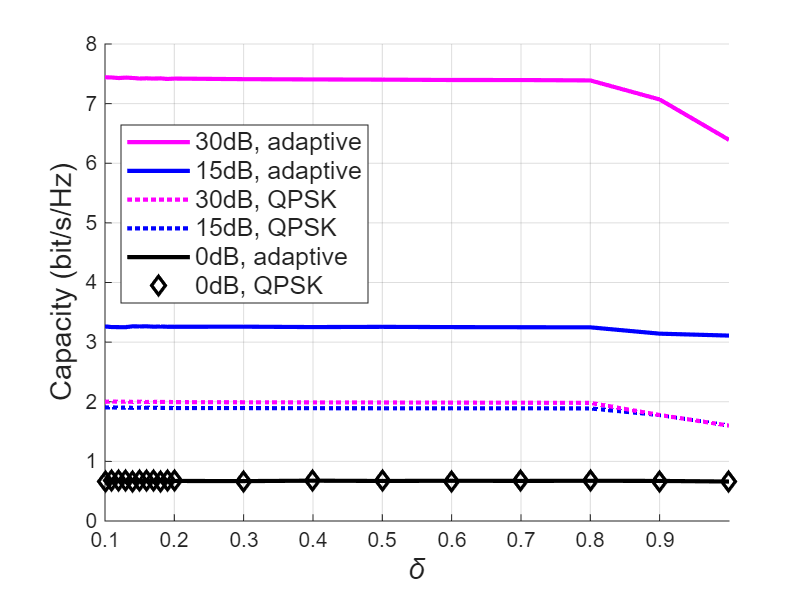}
    \caption{Constrained capacity vs. $\delta$ for fixed $\mathsf{SNR_{tx}}$, for RRC pulses with $\beta=0.25$.}
    \label{fig:adaptautx}
\end{figure}

Figs.~\ref{fig:adaptautx} and~\ref{fig:adaptaurx} show the constrained capacity versus the acceleration factor $\delta$ under adaptive bit loading with the same target error-probability threshold of $10^{-2}$ for fixed $\mathsf{SNR_{tx}}$ and $\mathsf{SNR_{rx}}$ respectively. These figures directly reveal how the loading outcome changes as the acceleration factor varies. We compare the adaptive loading results with fixed constellation of QPSK.  In Fig.~\ref{fig:adaptautx}, we can see that the constrained capacity shows similar behavior as the Gaussian capacity shown in \cite{zhang2022faster}, the constrained capacity rises as $\delta$ decreases and then saturates at the Gaussian capacity threshold $\delta=0.8$ \cite{zhang2022faster}. For sufficient $\mathsf{SNR_{tx}}$, if higher order modulation is allowed, the adaptive curve chooses higher order modulation and outperforms the fixed-alphabet result.


\begin{figure}
    \centering
    \includegraphics[width=0.8\linewidth]{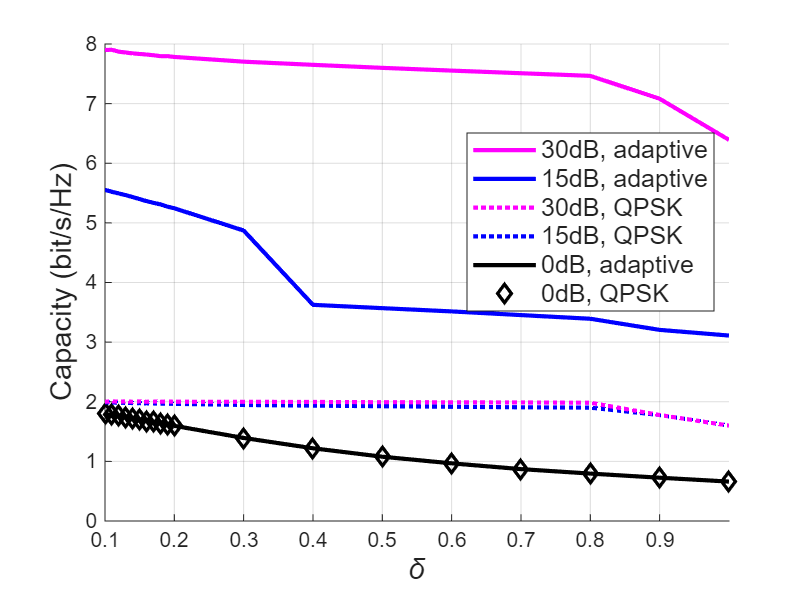}
    \caption{Constrained capacity vs. $\delta$ for fixed $\mathsf{SNR_{rx}}$, for RRC pulses with $\beta=0.25$.}
    \label{fig:adaptaurx}
\end{figure}

In Fig.~\ref{fig:adaptaurx}, the adaptive scheme outperforms fixed QPSK because
the increased transmit power associated with a smaller $\delta$ under
fixed $\mathsf{SNR_{ rx}}$ enables more eigenchannels to support
higher order modulation. However, at $\mathsf{SNR_{ rx}}=30$ dB,
the additional gain is limited because the largest available
constellation is 256QAM. Most strong eigenchannels have
already been assigned 256QAM, whose rate is bounded by
8 bits per symbol. Therefore,  decreasing
$\delta$ further provides only a small improvement, mainly through the
remaining weaker eigenchannels.

\section{Conclusion}\label{sec:conclusion}

This paper characterized the constrained capacity of DFT-precoded
FTN signaling with finite-alphabet inputs. The results show that
time acceleration can improve spectral efficiency over Nyquist
signaling even when a fixed modulation order is used, although the
achievable rate remains bounded by the constellation size. Over the
low and moderate SNR ranges, combining a smaller
constellation with a smaller acceleration factor, i.e., a higher
symbol transmission rate, can outperform a larger constellation
with a larger acceleration factor. This operating point is also attractive
from an implementation perspective because lower order constellations
generally require simpler detection and may provide more favorable
instantaneous-to-average power-ratio characteristics. Furthermore,
the rate loss caused by imperfect DFT diagonalization becomes small
as the block length increases, since the residual inter-eigenchannel
interference rapidly diminishes.

Adaptive bit loading was also shown to be feasible for both the
matched and mismatched FTN models. By assigning higher order
constellations to stronger eigenchannels and lower order
constellations to weaker ones, adaptive loading improves spectral
efficiency compared with fixed modulation while satisfying the reliability requirement. These findings show that the
acceleration factor, modulation order, block length, and receiver
model should be selected jointly in practical finite input FTN
systems. Future work will extend the constrained-capacity analysis
to equalization-based FTN receivers and compare their achievable
rates and complexity with the DFT-precoded framework considered
here.

\appendices

\section{Proof for Theorem~\ref{thm:asympftntx}}\label{app:asympconscaptxsnr}



In this appendix, we prove Theorem~\ref{thm:asympftntx} stated in Section~\ref{sec:conscapacityderivation}. 
We need to show that, under fixed
\(\mathsf{SNR_{ tx}}\), the constrained capacity does not grow
without bound as \(\delta \to 0\). The proof proceeds in three
steps. First, we use the folded-spectrum approximation to
replace the sum of active eigenvalues by its limiting integral.
Second, we show that the product \(\lambda_d E_s\), which
determines the effective SNR of the \(d\)th eigenchannel,
converges to a finite quantity. Third, we substitute this limiting
eigenchannel model into the finite-alphabet constrained-capacity
expression and average over the active eigenchannels. This
shows that the limiting spectral efficiency is a finite constant
independent of \(\delta\).
First, we define
\begin{equation}
f_d=\frac{d}{N\delta T},
\qquad
\Delta f=\frac{1}{N\delta T}.
\end{equation}
Assume $N$ is sufficiently large, approximate the sum of the eigenvalues in the denominator of \eqref{eqn:symbolematch} using the Riemann sum as  
\begin{align}
\sum_{d=0}^{N_{{r}}-1}\lambda_j
&\overset{(a)}{\approx}
\frac{1}{\delta T}\sum_{d=0}^{N_r-1}G(f_d)=N\sum_{j=0}^{N_r-1}G(f_d)\Delta f \notag\\
&\to
N\int_{-\frac{W}{2}}^{\frac{W}{2}}G(f)\,df.
\end{align}
where (a) is because  $\lambda_d \approx \frac{1}{\delta T}
    G\!\left(\frac{d}{N\delta T}\right)$ for $\delta TW<1$. 
Substituting this into the symbol energy in \eqref{eqn:symbolematch} gives
\begin{equation}
E_s\approx
\frac{P_{{TX}}\delta T}
{\int_{-\frac{W}{2}}^{\frac{W}{2}}G(f)\,df}. \label{eqn:asympE}
\end{equation}
For the $d$th active eigenchannel, the transmit SNR is 
\begin{equation}
\mathrm{SNR}_d
=
\frac{\lambda_dE_s}{\sigma_0^2}
\approx\frac{1}{\delta T}G(f_d)
\frac{P_{{TX}}\delta T}
{\int_{-\frac{W}{2}}^{\frac{W}{2}}G(f)\,df}=
\frac{P_{{TX}}G(f_d)}
{\sigma_0^2\int_{-\frac{W}{2}}^{\frac{W}{2}}G(f)\,df}.\label{eqn:asympsnrtx}
\end{equation}
By changing variable $u=\sqrt{\lambda_d}z$ and plugging \eqref{eqn:asympsnrtx} into \eqref{eqn:itheigenchnlcap}, the asymptotic behavior of the constrained capacity of the $d$th eigenchannel is shown in \eqref{eqn:asympdeigenchnlbehav}.
\begin{figure*}
\begin{equation}
I_d
=
\log_2 M
-\frac{1}{M}\sum_{m=0}^{M-1}
\int_{-\infty}^{+\infty}
\frac{e^{-z^2/(2\sigma_0^2)}}{2\pi\sigma_0^2}
\log_2
\left(
\sum_{\ell=0}^{M-1}
\exp\!\left(
-\frac{1}{
2\sigma_0^2
}
\left|z-\sqrt{
\frac{P_{{TX}}G(f_d)}
{\int_{-\frac{W}{2}}^{\frac{W}{2}}G(v)\,dv}
}\,(s[m]-s[\ell])\right|^2-z^2
\right)
\right)
\,dz. \label{eqn:asympdeigenchnlbehav}
\end{equation}
\end{figure*}
The overall spectral efficiency is the average over the active eigenchannels, which is
\begin{equation}
C'(\delta)=\frac{1}{N_r}\sum_{d=0}^{N_r-1} I_d,
\end{equation}
we therefore obtain the asymptotic constrained capacity of DFT-precoded FTN in \eqref{eqn:asympconscap}. We can see from \eqref{eqn:asympconscap} that asymptotically $C'(\delta)$ is independent of $\delta$ and the capacity is not equal to the Gaussian capacity.


\begin{figure*}
\begin{equation}
\begin{aligned}
\lim_{\delta\to 0} C'(\delta)
&=
\frac{1}{W}
\int_{-\frac{W}{2}}^{\frac{W}{2}}
\Bigg[
\log_2 M
-\frac{1}{M}\sum_{m=0}^{M-1}
\int_{-\infty}^{+\infty}
\frac{e^{-z^2/(2\sigma_0^2)}}{2\pi\sigma_0^2}  \\
&\qquad\qquad \times
\log_2
\left(
\sum_{\ell=0}^{M-1}
\exp\!\left(
-\frac{1}{
2\sigma_0^2
}
\left|z-\sqrt{
\frac{P_{{TX}}G(f)}
{\int_{-\frac{W}{2}}^{\frac{W}{2}}G(v)\,dv}
}\,(s[m]-s[\ell])\right|^2-z^2
\right)
\right)
\,dz
\Bigg]df .
\end{aligned}
\label{eqn:asympconscap}
\end{equation}
\end{figure*}

\section{Proof for Theorem~\ref{thm:asymprxsnrconscap}}\label{app:app2}

In this appendix, we prove the asymptotic constrained capacity stated in  Theorem~\ref{thm:asymprxsnrconscap}.

Under fixed $\mathsf{SNR_{rx}}$, the transmit power is equal to $P_{TX}\delta$. Considering \eqref{eqn:asympE},  Remark~\ref{remark:txrxsnr} gives
\begin{equation}
E_s\approx
\frac{P_{TX}\delta T}
{\int_{-\frac{W}{2}}^{\frac{W}{2}}G(f)\,df},
\end{equation}
which is independent of $\delta$.
 From the parallel-channel model in \eqref{eqn:eigenchnlexpression}, the SNR for the $d$th active eigenchannel $\gamma_d$ is 
\begin{equation}
\gamma_d=\frac{\lambda_d E_s}{\sigma_0^2}. \label{eqn:snrrxasymp}
\end{equation}
As $\delta$ turns to 0, the active eigenvalues $\lambda_d$ satisfy the approximation 
\begin{equation}
\lambda_d \approx \frac{1}{\delta T}G(f_d),
\qquad
f_d=\frac{d}{N\delta T}.
\end{equation}
Therefore, for every active eigenchannel with $\lambda_d>0$,
\begin{equation}
\gamma_d(\delta)\to\infty,\qquad \delta\to 0.
\end{equation}

 We try to find the asymptotic behavior of the constrained capacity for the $d$th eigenchannel. We start with approximating \eqref{eqn:itheigenchnlcap}. 
In order to simplify the notation, we denote 
\begin{equation}
\tau_d[m,\ell]\triangleq s_d[m]-s_d[\ell].
\label{eqn:delta_def}
\end{equation}
Since each symbol $s_d$ is drawn following an identical distribution, we can simply denote $\tau_d[m,\ell]$ as $\tau_{m\ell}$. 
Then \eqref{eqn:itheigenchnlcap} becomes
\begin{align}
    C_d(\delta)&=\log_2M-\frac{1}{M}\sum_{m=0}^{M-1}\int_{-\infty}^{+\infty}\frac{\exp\left({-\frac{u^2}{2\lambda_d\sigma_0^2}}\right)}{2\pi\lambda_d\sigma_0^2} \notag \\
    &\times\log_2\left(\sum_{k=0}^{M-1}\exp\left(-\frac{|u-\lambda_d\sqrt{E_s}\tau_{m\ell}|^2-u^2}{2\lambda_d\sigma_0^2}\right)\right)du. \label{eqn:substitute}
\end{align}
We now expand the quadratic term inside the exponential:
\begin{align}
-\frac{|u-\lambda_d\sqrt{E_s}\tau_{m\ell}|^2-u^2}{2\lambda_d\sigma_0^2}
&=
-\frac{-2\lambda_d\sqrt{E_s}\tau_{m\ell}u+\lambda_d^2E_s|\tau_{m\ell}|^2}{2\lambda_d\sigma_0^2}
\notag\\
&=
\frac{\tau_{m\ell}\sqrt{E_s}u}{\sigma_0^2}
-\frac{\lambda_dE_s|\tau_{m\ell}|^2}{2\sigma_0^2}.
\label{eqn:exponent_simplified}
\end{align}
Substituting \eqref{eqn:exponent_simplified} into \eqref{eqn:substitute} yields
\begin{align}
C_d(\delta)
&=
\log_2 M
-\frac{1}{M}\sum_{m=0}^{M-1}
\int_{-\infty}^{\infty}
\frac{\exp\left(-\frac{u^2}{2\lambda_d\sigma_0^2}\right)}{2\pi \lambda_d\sigma_0^2} \notag \\
&\times\log_2\!\left(
\sum_{\ell=0}^{M-1}
\exp\!\left(
\frac{\tau_{m\ell}\sqrt{E_s}u}{\sigma_0^2}
-\frac{\lambda_dE_s|\tau_{m\ell}|^2}{2\sigma_0^2}
\right)
\right)\,du.
\label{eq:mi_after_expand}
\end{align}
Next, we separate the desired-symbol term from other terms. When $\ell=m$, we have
\begin{equation}
\tau_{mm}=s_d[m]-s_d[m]=0 .
\label{eq:delta_kk}
\end{equation}
Therefore,
\begin{equation}
\exp\!\left(
\frac{\tau_{mm}\sqrt{E_s}u}{\sigma_0^2}
-\frac{\lambda_dE_s|\tau_{mm}|^2}{2\sigma_0^2}
\right)
=
\exp(0)=1 .
\label{eqn:l_equal_k_term}
\end{equation}
Using \eqref{eqn:l_equal_k_term},  \eqref{eq:mi_after_expand} can be rewritten as
\begin{align}
C_d(\delta)
&=
\log_2 M
-\frac{1}{M}\sum_{m=0}^{M-1}
\int_{-\infty}^{\infty}
\frac{\exp\left(-\frac{u^2}{2\lambda_d\sigma_0^2}\right)}{2\pi \lambda_d\sigma_0^2}\times \notag \\
&\log_2\!\left(
1+
\sum_{\ell\neq m}
\exp\!\left(
\frac{\tau_{m\ell}\sqrt{E_s}u}{\sigma_0^2}
-\frac{\lambda_dE_s|\tau_{m\ell}|^2}{2\sigma_0^2}
\right)
\right)\,du.
\label{eq:mi_correct_wrong}
\end{align}
We normalize the Gaussian integration variable by setting
\begin{equation}
u=\sqrt{\lambda_d}\sigma_0 z,
\qquad
du=\sqrt{\lambda_d}\sigma_0\,dz .
\label{eqn:change_variable}
\end{equation}
Thus, \eqref{eq:mi_correct_wrong} can be rewritten as
\begin{align}
&C_d(\delta)
=
\log_2 M
-\frac{1}{M}\sum_{m=0}^{M-1}
\int_{-\infty}^{\infty}
\frac{\exp\left(-z^2/2\right)}{2\pi} \notag \\
&\times\log_2\!\left(
1+
\sum_{\ell\neq k}
\exp\!\left(
\frac{\sqrt{\lambda_dE}\,\tau_{m\ell}}{\sigma_0}z
-\frac{\lambda_dE_s|\tau_{m\ell}|^2}{2\sigma_0^2}
\right)
\right)\,dz .
\label{eqn:mi_z_form}
\end{align}
Using \eqref{eqn:snrrxasymp}, $C_d(\delta)$ can be written as
\begin{align}
C_d(\delta)
&=
\log_2 M
-\frac{1}{M}\sum_{m=0}^{M-1}\int_{-\infty}^{\infty}\phi(z)\times \notag \\
&\qquad\log_2\!\left(
1+\sum_{\ell\neq m}\exp\!\big(\Psi_{m\ell}(z,\gamma_d)\big)
\right)\,dz,\label{eqn:93}
\end{align}
where $\phi(z)=\frac{1}{2\pi}e^{-z^2/2}$ and 
\begin{equation}
\Psi_{m\ell}(z,\gamma_d)
=
\sqrt{\gamma_d}\,\tau_{m\ell} z
-\frac{\gamma_d}{2}|\tau_{m\ell}|^2.
\end{equation}
We also define 
\begin{equation}
L_{m,d}(z)=
\log_2\!\left(
1+\sum_{\ell\neq m}
\exp\!\left(
\sqrt{\gamma_d}\,\tau_{m\ell}z-\frac{\gamma_d}{2}|\tau_{m\ell}|^2
\right)
\right). \label{eqn:defLmdz}
\end{equation}
{To show that \(C_d(\delta)\to \log_2 M\), it remains to prove that the
second term in \eqref{eqn:93} vanishes, i.e.,
\begin{equation}
\int_{-\infty}^{+\infty} f_{\delta}(z)\,dz \to 0,
\qquad
f_{\delta}(z)=\phi(z)L_{m,d}(z).
\end{equation}
The difficulty is that the integration region is unbounded and the
integrand depends on \(\delta\) through \(\gamma_d\). We therefore prove
the convergence by separating the integral into two parts. First, the
tail region \(|z|>A\) is made uniformly small by an integrable upper
bound that does not depend on \(\delta\). Second, on the bounded interval
\(|z|\le A\), the logarithmic term \(L_{m,d}(z)\) is shown to converge
uniformly to zero as \(\gamma_d\to\infty\).

From \eqref{eqn:defLmdz}, for any \(z\), we have
\begin{equation}
0 \le L_{m,d}(z)
\le
\log_2\left(1+(M-1)e^{|z|^2/2}\right).
\end{equation}
Hence,
\begin{equation}
0\le f_{\delta}(z)\le g(z),
\end{equation}
where
\begin{equation}
g(z)=\phi(z)\left(\log_2 M+\frac{|z|^2}{2\ln 2}\right).
\end{equation}
Since \(g(z)\) is integrable, for any \(\varepsilon>0\), there exists
\(A>0\) such that
\begin{equation}
\int_{|z|>A} g(z)\,dz < \varepsilon .
\end{equation}
Therefore, the contribution from the tail region satisfies
\begin{equation}
0\le
\int_{|z|>A} f_{\delta}(z)\,dz
\le
\int_{|z|>A} g(z)\,dz
<\varepsilon . \label{eqn:needtocite1}
\end{equation}

It remains to control the integral over the bounded interval
\(|z|\le A\). For any \(\ell\ne m\) and \(|z|\le A\), we have 
\begin{align}
\sqrt{\gamma_d}\tau_{m\ell}z-\frac{\gamma_d}{2}|\tau_{m\ell}|^2
&\le
\sqrt{\gamma_d}|\tau_{m\ell}|A
-\frac{\gamma_d}{2}|\tau_{m\ell}|^2 .
\end{align}
As \(\delta\to 0\), we have \(\gamma_d\to\infty\). The negative term
of order \(\gamma_d\) dominates the positive term of order
\(\sqrt{\gamma_d}\), and therefore
\begin{equation}
\sqrt{\gamma_d}|\tau_{m\ell}|A
-\frac{\gamma_d}{2}|\tau_{m\ell}|^2
\to -\infty .
\end{equation}
Consequently,
\begin{equation}
\exp\left(
\sqrt{\gamma_d}\tau_{m\ell}z
-\frac{\gamma_d}{2}|\tau_{m\ell}|^2
\right)\to 0
\end{equation}
uniformly for \(|z|\le A\). Since the summation over
\(\ell\ne m\) contains only finitely many constellation points, it follows
that
\begin{equation}
\sup_{|z|\le A} L_{m,d}(z)\to 0 .
\end{equation}
Thus,
\begin{align}
0\le
\int_{|z|\le A} f_{\delta}(z)\,dz
&=
\int_{|z|\le A} \phi(z)L_{m,d}(z)\,dz  \\
&\le
\sup_{|z|\le A}L_{m,d}(z)
\int_{|z|\le A}\phi(z)\,dz  \\
&\le
\sup_{|z|\le A}L_{m,d}(z)
\to 0 .
\end{align}
Hence, for sufficiently small \(\delta\),
\begin{equation}
\int_{|z|\le A} f_{\delta}(z)\,dz < \varepsilon . \label{eqn:needtocite2}
\end{equation}
Combining the bounded interval \eqref{eqn:needtocite1} and the tail region \eqref{eqn:needtocite2} gives
\begin{equation}
0\le
\int_{-\infty}^{+\infty} f_{\delta}(z)\,dz
=
\int_{|z|\le A} f_{\delta}(z)\,dz
+
\int_{|z|>A} f_{\delta}(z)\,dz
<2\varepsilon .
\end{equation}
Since \(\varepsilon>0\) is arbitrary, we obtain
\begin{equation}
\int_{-\infty}^{+\infty} f_{\delta}(z)\,dz\to 0 .
\end{equation}
Therefore,
\begin{equation}
\lim_{\delta\to 0} C_d(\delta)=\log_2 M
\end{equation}
for every active eigenchannel. Finally, using \eqref{eqn:aveovereigenchnl},
 we conclude that
\begin{equation}
\lim_{\delta\to 0} C'(\delta)=\log_2 M .
\end{equation}}

\bibliographystyle{IEEEtran}

\bibliography{main}

@ARTICLE{rusek,
  author={Rusek, Fredrik and Anderson, John B.},
  journal={IEEE Trans. Inf. Theory}, 
  title={{Constrained capacities for faster-than-Nyquist signaling}}, 
  year={2009},
  volume={55},
  number={2},
  pages={764-775},
  doi={10.1109/TIT.2008.2009832}}

@article{gray,
year = {2006},
volume = {2},
journal = {Foundations and Trends® in Communications and Information Theory},
title = {Toeplitz and Circulant Matrices: A Review},
doi = {10.1561/0100000006},
issn = {1567-2190},
number = {3},
pages = {155-239},
author = {Robert M. Gray}
}

@ARTICLE{mazo,
  author={Mazo, J. E.},
  journal={The Bell Syst. Tech. J.}, 
  title={{Faster-than-{N}yquist signaling}}, 
  year={1975},
  volume={54},
  number={8},
  pages={1451-1462},
  doi={10.1002/j.1538-7305.1975.tb02043.x}}

@book{goldsmith,
place={Cambridge}, 
title={{Wireless Communications}},
DOI={10.1017/CBO9780511841224},
publisher={Cambridge University Press}, 
author={Goldsmith, Andrea}, 
year={2005}}

@INPROCEEDINGS{property,
  author={Daniel Kim, Yong Jin},
  booktitle={IEEE Wireless Communications and Networking Conference}, 
  title={Properties of faster-than-{N}yquist channel matrices and folded-spectrum, and their applications}, 
  year={2016},
  volume={},
  number={},
  pages={1-7},
  doi={10.1109/WCNC.2016.7565044}}

@ARTICLE{1_R1,
  author={Ishihara, Takumi and Sugiura, Shinya and Hanzo, Lajos},
  journal={IEEE Access}, 
  title={The Evolution of Faster-Than-{N}yquist Signaling}, 
  year={2021},
  volume={9},
  number={},
  pages={86535-86564},
  doi={10.1109/ACCESS.2021.3088997}}

@ARTICLE{svd,
  author={Ishihara, Takumi and Sugiura, Shinya},
  journal={IEEE Trans. Wirel. Commun.}, 
  title={{SVD}-Precoded Faster-Than-{N}yquist Signaling With Optimal and Truncated Power Allocation}, 
  year={2019},
  volume={18},
  number={12},
  pages={5909-5923},
  doi={10.1109/TWC.2019.2940632}}

@article{zhang2022faster,
  title={Faster-than-{N}yquist signaling for {MIMO} communications},
  author={Zhang, Zichao and Yuksel, Melda and Yanikomeroglu, Halim},
  journal={IEEE Trans. Wirel. Commun.},
  volume={22},
  number={4},
  pages={2379--2392},
  year={2022},
  publisher={IEEE}
}

@ARTICLE{asympftn,
  author={Yoo, Young Geon and Cho, Joon Ho},
  journal={IEEE Commun. Lett.}, 
  title={ Asymptotic Optimality of Binary Faster-than-{N}yquist Signaling}, 
  year={2010},
  volume={14},
  number={9},
  pages={788-790},
  doi={10.1109/LCOMM.2010.072910.100499}}

@ARTICLE{ungerboeckconstrained,
  author={Ungerboeck, G.},
  journal={IEEE Transactions on Information Theory}, 
  title={Channel coding with multilevel/phase signals}, 
  year={1982},
  volume={28},
  number={1},
  pages={55-67},
  doi={10.1109/TIT.1982.1056454}}

@Article{whatshould6Gbe,
author={Dang, Shuping
and Amin, Osama
and Shihada, Basem
and Alouini, Mohamed-Slim},
title={What should 6{G} be?},
journal={Nature Electronics},
year={2020},
month={Jan},
day={01},
volume={3},
number={1},
pages={20-29},
issn={2520-1131},
doi={10.1038/s41928-019-0355-6},
url={https://doi.org/10.1038/s41928-019-0355-6}
}

@INPROCEEDINGS{timelocalization,
  author={Gattami, Ather and Ringh, Emil and Karlsson, Johan},
  booktitle={2015 IEEE Global Communications Conference (GLOBECOM)}, 
  title={Time Localization and Capacity of Faster-Than-{N}yquist Signaling}, 
  year={2015},
  volume={},
  number={},
  pages={1-7},
  doi={10.1109/GLOCOM.2015.7417358}}

@INPROCEEDINGS{precodeftnfschnl,
  author={Ishihara, Takumi and Sugiura, Shinya},
  booktitle={2021 IEEE International Conference on Communications Workshops (ICC Workshops)}, 
  title={Precoded Faster-than-{N}yquist Signaling with Optimal Power Allocation in Frequency-Selective Channel}, 
  year={2021},
  volume={},
  number={},
  pages={1-6},
  doi={10.1109/ICCWorkshops50388.2021.9473860}}

@INPROCEEDINGS{eigenvaluedecompftn,
  author={Chaki, Prakash and Ishihara, Takumi and Sugiura, Shinya},
  booktitle={2021 IEEE International Symposium on Information Theory (ISIT)}, 
  title={Eigenvalue Decomposition Precoded Faster-Than-{N}yquist Transmission of Index Modulated Symbols}, 
  year={2021},
  volume={},
  number={},
  pages={3279-3284},
  doi={10.1109/ISIT45174.2021.9518263}}

@ARTICLE{asynmacftn,
  author={Zhang, Zichao and Yuksel, Melda and Guvensen, Gokhan M. and Yanikomeroglu, Halim},
  journal={IEEE Communications Letters}, 
  title={Capacity Region of Asynchronous Multiple Access Channels With {FTN}}, 
  year={2023},
  volume={27},
  number={7},
  pages={1719-1723},
  doi={10.1109/LCOMM.2023.3269739}}

@ARTICLE{mismatchorig,
  author={Lapidoth, A.},
  journal={IEEE Transactions on Information Theory}, 
  title={Mismatched decoding and the multiple-access channel}, 
  year={1996},
  volume={42},
  number={5},
  pages={1439-1452},
  doi={10.1109/18.532884}}

@ARTICLE{Gokhanmismatch,
  author={{\"U}{\c{c}}{\"u}nc{\"u}, Ali Bulut and
        G{\"u}vensen, G{\"o}khan M. and
        Y{\i}lmaz, Ali {\"O}zg{\"u}r},
  journal={IEEE Transactions on Communications}, 
  title={A Reduced Complexity {U}ngerboeck Receiver for Quantized Wideband Massive {SC-MIMO}}, 
  year={2021},
  volume={69},
  number={7},
  pages={4921-4936},
  doi={10.1109/TCOMM.2021.3071537}}

@ARTICLE{airmis,
  author={Merhav, N. and Kaplan, G. and Lapidoth, A. and Shamai Shitz, S.},
  journal={IEEE Transactions on Information Theory}, 
  title={On information rates for mismatched decoders}, 
  year={1994},
  volume={40},
  number={6},
  pages={1953-1967},
  doi={10.1109/18.340469}}

@ARTICLE{capmis,
  author={Csisz{\'a}r, I. and Narayan, P.},
  journal={IEEE Transactions on Information Theory}, 
  title={Channel capacity for a given decoding metric}, 
  year={1995},
  volume={41},
  number={1},
  pages={35-43},
  doi={10.1109/18.370120}}

@INPROCEEDINGS{rusekair,
  author={Rusek, Fredrik and Fertonani, Dario},
  booktitle={2009 IEEE International Symposium on Information Theory}, 
  title={Lower bounds on the information rate of intersymbol interference channels based on the {U}ngerboeck observation model}, 
  year={2009},
  volume={},
  number={},
  pages={1649-1653},
  doi={10.1109/ISIT.2009.5205784}}

@INPROCEEDINGS{airbicm,
  author={Martinez, Alfonso and Guillen i. Fabregas, Albert and Caire, Giuseppe and Willems, Frans},
  booktitle={2008 IEEE International Symposium on Information Theory}, 
  title={Bit-interleaved coded modulation revisited: A mismatched decoding perspective}, 
  year={2008},
  volume={},
  number={},
  pages={2337-2341},
  doi={10.1109/ISIT.2008.4595408}}

@ARTICLE{ebrahimlowcomplx,
  author={Ibrahim, Ahmed and Bedeer, Ebrahim and Yanikomeroglu, Halim},
  journal={IEEE Open Journal of the Communications Society}, 
  title={A Novel Low Complexity Faster-than-{N}yquist {(FTN)} Signaling Detector for Ultra High-Order {QAM}}, 
  year={2021},
  volume={2},
  number={},
  pages={2566-2580},
  doi={10.1109/OJCOMS.2021.3126805}}

@ARTICLE{Ebrahimlowcomplxclass,
  author={Abbasi, Sina and Bedeer, Ebrahim},
  journal={IEEE Communications Letters}, 
  title={Low Complexity Classification Approach for Faster-Than-{N}yquist {(FTN)} Signaling Detection}, 
  year={2023},
  volume={27},
  number={3},
  pages={876-880},
  doi={10.1109/LCOMM.2023.3236953}}

@article{yu2017low,
  title={Low-complexity graph-based turbo equalisation for single-carrier and multi-carrier {FTN} signalling},
  author={Yu, Tianhang and Zhao, Minjian and Zhong, Jie and Zhang, Jian and Xiao, Pei},
  journal={IET Signal Processing},
  volume={11},
  number={7},
  pages={838--845},
  year={2017},
  publisher={Wiley Online Library}
}

@INPROCEEDINGS{pinarlowcomplx,
  author={{\c{S}}en, P{\i}nar and
        Akta{\c{s}}, Tu{\u{g}}can and
        Y{\i}lmaz, A. {\"O}zg{\"u}r},
  booktitle={2014 IEEE Wireless Communications and Networking Conference (WCNC)}, 
  title={A low-complexity graph-based {LMMSE} receiver designed for colored noise induced by {FTN}-signaling}, 
  year={2014},
  volume={},
  number={},
  pages={642-647},
  doi={10.1109/WCNC.2014.6952123}}

@ARTICLE{linearprecftn,
  author={Li, Qiang and Gong, Feng-Kui and Song, Pei-Yang and Li, Guo and Zhai, Sheng-Hua},
  journal={IEEE Transactions on Broadcasting}, 
  title={Beyond {DVB-S2X}: Faster-Than-{N}yquist Signaling With Linear Precoding}, 
  year={2020},
  volume={66},
  number={3},
  pages={620-629},
  doi={10.1109/TBC.2019.2960941}}

@ARTICLE{precftnotfs,
  author={Hong, Zekun and Sugiura, Shinya and Xu, Chao and Hanzo, Lajos},
  journal={IEEE Wireless Communications Letters}, 
  title={Precoded Faster-Than-{N}yquist Signaling Using Optimal Power Allocation for {OTFS}}, 
  year={2025},
  volume={14},
  number={1},
  pages={173-177},
  doi={10.1109/LWC.2024.3491777}}

@ARTICLE{precequalmultipathfad,
  author={Wen, Shan and Liu, Guanghui and Liu, Chengxiang and Qu, Huiyang and Zhang, Lei and Imran, Muhammad Ali},
  journal={IEEE Transactions on Vehicular Technology}, 
  title={Joint Precoding and Pre-Equalization for Faster-Than-{N}yquist Transmission Over Multipath Fading Channels}, 
  year={2022},
  volume={71},
  number={4},
  pages={3948-3963},
  doi={10.1109/TVT.2022.3146423}}

@ARTICLE{kim2026fasterthannyquistsignalingfinitetimebandwidth,
  author={Kim, Yong Jin Daniel},
  journal={IEEE Transactions on Communications}, 
  title={Faster-Than-{N}yquist Signaling in the Finite Time-Bandwidth Product Regime}, 
  year={2026},
  volume={74},
  number={},
  pages={5605-5618},
  doi={10.1109/TCOMM.2026.3668158}}

@ARTICLE{modelintasgausnoise,
  author={Medard, M.},
  journal={IEEE Transactions on Information Theory}, 
  title={The effect upon channel capacity in wireless communications of perfect and imperfect knowledge of the channel}, 
  year={2000},
  volume={46},
  number={3},
  pages={933-946},
  doi={10.1109/18.841172}}

@inproceedings{ofdmftn,
  title={Bit and Power Allocation Scheme for Multicarrier Faster-Than-{N}yquist Signaling},
  author={Hao Duan and Yuanyuan Gao and Mingxi Guo and Yuehong Shen},
  year={2016/01},
  booktitle={Proceedings of the 2016 International Conference on Intelligent Control and Computer Application},
  pages={367-370},
  issn={2352-538X},
  isbn={978-94-6252-154-4},
  doi={10.2991/icca-16.2016.87},
  publisher={Atlantis Press}
}

@INPROCEEDINGS{ftnadaptivetimedomain,
  author={Liu, Mengmeng and Li, Shuangyang and Li, Qian and Bai, Baoming},
  booktitle={2018 10th International Conference on Wireless Communications and Signal Processing (WCSP)}, 
  title={Faster-than-{N}yquist Signaling Based Adaptive Modulation and Coding}, 
  year={2018},
  volume={},
  number={},
  pages={1-5},
  doi={10.1109/WCSP.2018.8555911}}

@article{zhang2026pushinglimitsunlockingpotential,
      title={Pushing the Limits: Unlocking the Potential of Faster-than-{N}yquist Signaling}, 
      author={Zichao Zhang and Melda Yuksel and Shuangyang Li and Gokhan M. Guvensen and Halim Yanikomeroglu},
      journal={arXiv},
      year={2026},
      eprint={2606.19226},
      archivePrefix={arXiv},
      primaryClass={eess.SP},
}

@ARTICLE{zhang2025iapr,
  author={Zhang, Zichao and Yuksel, Melda and Guvensen, Gokhan M. and Yanikomeroglu, Halim},
  journal={IEEE Transactions on Wireless Communications}, 
  title={Capacity and {IAPR} Analysis for {MIMO} Faster-Than-{N}yquist Signaling With High Acceleration Rate}, 
  year={2026},
  volume={25},
  number={},
  pages={1451-1466},
  doi={10.1109/TWC.2025.3591044}}

\end{document}